\theoremstyle{plain}
\newtheorem{thm}{Theorem}[section]
\newtheorem{prop}[thm]{Proposition}
\newtheorem{col}[thm]{Corollary}
\theoremstyle{definition}
\newtheorem{defn}[thm]{Definition}
\newtheorem{exmp}[thm]{Example}
\theoremstyle{remark}
\newtheorem{rem}[thm]{Remark}
\newcommand{\Q}{{\boldsymbol{\mathrm{Q}}}}
\newcommand{\BQ}{\mathcal{B}_\Q}
\newcommand{\FQ}{\mathcal{F}_\Q}
\newcommand{\dt}[1]{\operatorname{Det} \left ( #1 \right )}
\newcommand{\textfrac}[2]{#1 / #2}
\newcommand{\imaginary}{\mathrm{i}}
\newcommand{\enter}{\vspace{\baselineskip}}
\newcommand{\sctnbig}[1]{\Gamma \bigl ( #1 \bigr )}
\newcommand{\deDonder}{{d \negmedspace D \mspace{-2mu}}}
\newcommand{\Lorenz}{L}
\newcommand{\Lie}{\mathsterling}
\newcommand{\lie}{\ell}
\newcommand{\commutatorbig}[2]{\bigl [ #1 , #2 \bigr ]}
\newcommand{\setbig}[1]{\bigl \{ #1 \bigr \}}
\newsavebox{\foobox}
\newcommand{\subalign}[1]{%
  \vcenter{%
    \Let@ \restore@math@cr \default@tag
    \baselineskip\fontdimen10 \scriptfont\tw@
    \advance\baselineskip\fontdimen12 \scriptfont\tw@
    \lineskip\thr@@\fontdimen8 \scriptfont\thr@@
    \lineskiplimit\lineskip
    \ialign{\hfil$\m@th\scriptstyle##$&$\m@th\scriptstyle{}##$\crcr
      #1\crcr
    }%
  }
}
\providecommand{\sectionref}[1]{Section~\ref{#1}}
\providecommand{\sectionsaref}[2]{Sections~\ref{#1} and \ref{#2}}
\providecommand{\eqnref}[1]{Equation~\eqref{#1}}
\providecommand{\eqnsaref}[2]{Equations~\eqref{#1} and \eqref{#2}}
\providecommand{\propsaref}[2]{Propositions~\ref{#1} and \ref{#2}}
\providecommand{\propscref}[2]{Propositions~\ref{#1}, \ref{#2}}
\providecommand{\ftnref}[1]{Footnote~\ref{#1}}
\providecommand{\defnsaref}[2]{Definitions~\ref{#1} and \ref{#2}}
\providecommand{\defnssaref}[3]{Definitions~\ref{#1}, \ref{#2} and \ref{#3}}
\providecommand{\thmsaref}[2]{Theoremata~\ref{#1} and \ref{#2}}
\providecommand{\exsaref}[2]{Examples~\ref{#1} and \ref{#2}}
\DeclareSymbolFont{extraitalic}      {U}{zavm}{m}{it}
\DeclareMathSymbol{\Qoppa}{\mathord}{extraitalic}{161}
\DeclareMathSymbol{\qoppa}{\mathord}{extraitalic}{162}
\DeclareMathSymbol{\Stigma}{\mathord}{extraitalic}{167}
\DeclareMathSymbol{\Sampi}{\mathord}{extraitalic}{165}
\DeclareMathSymbol{\sampi}{\mathord}{extraitalic}{166}
\DeclareMathSymbol{\stigma}{\mathord}{extraitalic}{168}
\title{\textsc{Symmetric Ghost Lagrange Densities for the Coupling of Gravity to Gauge Theories}}
\author{David Prinz\footnote{Department of Mathematics and Department of Physics at Humboldt University of Berlin and Department of Mathematics at University of Potsdam; prinz@\{math.hu-berlin.de, physik.hu-berlin.de, math.uni-potsdam.de\}}}
\date{November 15, 2025}
\begin{document}

\maketitle

\begin{abstract}
	We derive and present symmetric ghost Lagrange densities for the coupling of General Relativity to Yang--Mills theories. The graviton-ghost is constructed with respect to the linearized de Donder gauge fixing condition and the gauge ghost with respect to the covariant Lorenz gauge fixing condition. Both ghost Lagrange densities together with their accompanying gauge fixing Lagrange densities are obtained from the action of the diffeomorphism and gauge \emph{super-BRST differential} --- which we define as the composition of the BRST differential with its anti-BRST differential --- on suitable gauge fixing bosons. In addition, we introduce a \emph{total gauge fixing boson} and show that the complete symmetric ghost and gauge fixing Lagrange density can be generated thereof using the \emph{total super-BRST differential}. In particular, we generalize two earlier approaches for flat-spacetime Yang--Mills theories to General Relativity and covariant Yang--Mills theories: The original approach by Curci and Ferrari (1976), using the Faddeev--Popov method on non-linear gauge fixings, and the modern approach by Baulieu and Thierry-Mieg (1982), using BRST and anti-BRST symmetries with gauge fixing bosons.
\end{abstract}

\section{Introduction} \label{sec:introduction}

A central problem in the quantization of gauge theories is that the gauge symmetry needs to be broken in order to calculate the gauge boson propagator. To this end, a gauge fixing Lagrange density is added to the classical gauge theory Lagrange density: This now allows to obtain the gauge boson propagator as the inverse of the differential operator of the quadratic monomial. However, while this now produces well-defined tree-level expressions, a new problem arises at loop-level: Physical gauge bosons should only possess the experimentally verified transversal degrees of freedom. Unfortunately, gauge boson loops also produce a non-vanishing amplitude between transversal (i.e.\ physical) and longitudinal (i.e.\ unphysical) gauge boson modes. To overcome this issue, Feynman suggested to box and dismiss those diagrams to restore unitarity \cite{Feynman_QG}. This suggestion was then properly formulated by Faddeev and Popov by inventing so-called \emph{ghost} and \emph{antighost fields} with Grassmannian parity \cite{Faddeev_Popov}: These lead to fermionic particles with integer spin and thus violating the spin-statistic theorem --- however, when considered only as virtual particles, transversality of the perturbative expansion is successfully restored. This Faddeev--Popov construction has then been further embedded into an even more general setup using homological algebra into what is now called \emph{BRST symmetry} \cite{Becchi_Rouet_Stora_1,Becchi_Rouet_Stora_2,Becchi_Rouet_Stora_3,Tyutin} and the \emph{BV formalism} \cite{Batalin_Vilkovisky_1,Batalin_Vilkovisky_2}.

The outline of the present analysis is as follows: Given a gauge fixing condition, the ghost Lagrange density in the Faddeev--Popov construction (which is typically also used in the more general frameworks) is constructed such that the ghost field satisfies the residual gauge transformations as equations of motion, with the antighost field acting as the corresponding Lagrange multiplier field. While this construction has a clear interpretation in terms of its dynamics, its shortcoming is that the antighost is not the antiparticle of the ghost: This point has been first addressed by Curci and Ferrari using non-linear gauge fixing functionals \cite{Curci_Ferrari}. Then, Baulieu and Thierry-Mieg constructed a symmetric ghost Lagrange density, as well as a homotopy between different ghost Lagrange density constructions, for pure Quantum Yang--Mills theory on the basis of BRST and anti-BRST symmetries \cite{Baulieu_Thierry-Mieg}. It is precisely the aim of this article to generalize their construction to (effective) Quantum General Relativity and covariant Quantum Yang--Mills theory. Notably, the present analysis is building on the author's previous article \cite{Prinz_5} and the references therein, where the general setup has been constructed and studied: Concretely, in this article we provided a detailed mathematical introduction to the appearing constructions and particle fields using graded supergeometry. Specifically, the BRST operator can be understood as a cohomological vector field on the graded superbundle of particle fields and the gauge fixing fermion as a local functional of particle fields in ghost-degree minus one. Then, the gauge fixing and ghost Lagrange densities can be generated by acting with the BRST operator on such a gauge fixing fermion, which produces a local functional in ghost-degree zero. Notably, since the so-constructed gauge fixing and ghost Lagrange densities are BRST-exact, they do not contribute to the corresponding zeroth cohomology, which describes the physical degrees of freedom. In the present article, we start with a so-called gauge fixing boson, which is a local functional of particle fields in ghost-degree zero. Then, the application of the anti-BRST operator turns this into a gauge fixing fermion by transforming it into ghost-degree minus one and we can continue as before. However, this more involved construction has now a number of interesting consequences: First, the gauge fixing condition appears naturally and is the \emph{optimal gauge fixing condition}, as is studied in \cite{Prinz_7}. Furthermore, the obtained ghost Lagrange density is symmetric with respect to the ghost conjugation, such that the antighost is the antiparticle of the ghost. This leads to interesting symmetries and cancellations of longitudinal modes, which is also studied in \cite{Prinz_7}. Finally, we can also construct a homotopy between different ghost constructions, which can be seen as a \emph{ghost parameter} in addition to the well-known \emph{gauge fixing parameter}.

More precisely, given a gauge field \(\boldsymbol{\varphi}\) with coupling constant \(\alpha\), a corresponding infinitesimal gauge transformation \(\delta_Z \boldsymbol{\varphi}\) with respect to a Lie algebra valued vector field \(Z\) and a chosen gauge fixing functional \(\operatorname{GF} \left ( \boldsymbol{\varphi} \right )\). Let furthermore \(\theta\) and \(\overline{\theta}\) denote the corresponding ghost and antighost fields, \(\beta\) the Lautrup--Nakanishi auxiliary field \cite{Nakanishi,Lautrup} and \(\lambda\) the gauge fixing parameter. Then, the gauge fixing and Faddeev--Popov ghost Lagrange density reads
\begin{equation} \label{eqn:gf-fp-ghost}
		\mathcal{L}_\text{GF-FP-Ghost} \coloneq \left ( - \frac{1}{2 \alpha^2 \lambda} \operatorname{GF} \left ( \boldsymbol{\varphi} \right )^2 + \overline{\theta} \cdot \operatorname{GF} \left ( \delta_\theta \boldsymbol{\varphi} \right ) \right ) \dif V_g \, ,
\end{equation}
where ` \(\cdot\) ' denotes a scalar product on the corresponding Lie algebra and \(\dif V_g\) denotes the Riemannian volume form, see below for the definition. In particular, this can be generated from the following \emph{gauge fixing fermion}
\begin{equation}
	\chi \coloneq \overline{\theta} \cdot \left ( \frac{1}{2 \alpha} \operatorname{GF} \left ( \boldsymbol{\varphi} \right ) + \frac{1}{4} \beta \right ) \dif V_g \, ,
\end{equation}
which is a functional in ghost degree minus one, via the action of the corresponding BRST operator \(S \coloneq \delta_\theta\) (see \defnsaref{defn:diffeomorphism_brst_operator}{defn:gauge_brst_operator}), i.e.\ \(\mathcal{L}_\text{GF-FP-Ghost} \coloneq S \chi\), cf.\ \cite[Propositions 3.6 and 4.5]{Prinz_5}. This construction has the advantage of being rather simple to calculate. In addition, it provides an immediate interpretation for the equations of motion of the ghost and antighost fields: While the ghost field is constructed to satisfy residual gauge transformations as equations of motion, the antighost field is acting as the corresponding Lagrange multiplier. Unfortunately, the apparent asymmetry between the ghost and antighost results in an intransparent relationship between these two fields. This becomes in particular relevant when analyzing the longitudinal and transversal contributions of the corresponding Feynman integrals. Specifically, they can be understood via the so-called \emph{cancellation identities}, which establish pairwise cancellations of longitudinally contracted Feynman graphs in the perturbative expansion, cf.\ \cite{Prinz_7,tHooft_Veltman,Citanovic,Kissler_Kreimer,Gracey_Kissler_Kreimer,Kissler,Prinz_9}. Notably, when combined with the parametric representation of Feynman integrals, they can be also summarized into the definition of a third graph polynomial --- the so-called \emph{Corolla polynomial} --- cf.\ \cite{Kreimer_Yeats,Kreimer_Sars_vSuijlekom,Sars_PhD,Prinz_1,Golz_PhD}. Thus, this article is devoted to a proper derivation of symmetric ghost Lagrange densities using appropriate gauge fixing bosons, BRST and anti-BRST operators. More precisely, the resulting Lagrange densities of \emph{Quantum Gauge Theories (QGT)}\footnote{On the level of the Lagrange density, we use the word \emph{quantum} to indicate that the gauge fixing and ghost Lagrange densities have been added, such that a perturbative Feynman graph expansion is possible, i.e.\ the propagator of the gauge boson can be calculated and loop-level expressions are transversal.} with a symmetric ghost construction are Hermitian with respect to the ghost conjugation \(\dagger\) of \defnref{defn:ghost-conjugation}, i.e.\
\begin{subequations}
\begin{align}
	\mathcal{L}_\text{QGT}^\dagger & \equiv \mathcal{L}_\text{QGT} \, , \label{eqn:hermitian-qgt-lagrange-density}
	\intertext{where the ghost conjugation is defined as the following Hermitian ghost-grading reversal, interchanging ghosts and antighosts, i.e.}
	\theta^\dagger & \coloneq \overline{\theta} \, , \\
	\overline{\theta}^\dagger & \coloneq \theta
	\intertext{and}
	\beta^\dagger & \coloneq - \beta - \alpha \lambda \commutatorbig{ \, \overline{\theta}}{\theta \, } \, .
	\intertext{We remark that it will be later also convenient to define the shifted anti-Hermitian Lautrup--Nakanishi auxiliary field, given via}
	\beta^\prime & \coloneq \beta - \frac{\alpha \lambda}{2} \commutatorbig{ \, \overline{\theta}}{\theta \, } \, ,
	\intertext{such that}
	{\beta^\prime}^\dagger & \phantom{:} \equiv - \beta^\prime \, .
\end{align}
\end{subequations}
Thus, in the situation of \eqnref{eqn:hermitian-qgt-lagrange-density} the antighost is actually the antiparticle of ghost. Such Lagrange densities were first constructed by Curci and Ferrari using non-linear gauge fixing conditions for Yang--Mills theories on a flat spacetime \cite{Curci_Ferrari}. Then, their construction was formalized and generalized by Baulieu and Thierry-Mieg using BRST and anti-BRST symmetries \cite{Baulieu_Thierry-Mieg}. In this article, we generalize this construction to General Relativity and covariant Yang--Mills theories, that is Yang--Mills theories on curved spacetimes via an appropriate coupling to General Relativity. More precisely, we start with a so-called \emph{gauge fixing boson}\footnote{In the gravitational case it is even possible to be linear in the graviton field, cf.\ \eqnref{eqn:qym-gauge-fixing-boson-introduction} and \remref{rem:qgr-gauge-fixing-boson}.}
\begin{equation}
	W \coloneq - \frac{\lambda}{2} \left ( \boldsymbol{\varphi}^2 - \overline{\theta} \cdot \theta \right ) \dif V_g \, ,
\end{equation}
which is a functional in ghost degree zero. From this, we construct its associated gauge fixing fermion \(\omega\) via the action of the anti-BRST operator \(\overline{S}\), i.e.\ \(\omega \coloneq \overline{S} W\). Then, the corresponding gauge fixing and ghost Lagrange density \(\mathcal{L}_\text{GF-Ghost}\) is given via the action of the BRST operator \(S\) on \(\omega\), i.e.\ \(\mathcal{L}_\text{GF-Ghost} \coloneq S \omega\). Thus, finally we obtain:
\begin{equation}
	\mathcal{L}_\text{GF-Sym-Ghost} \coloneq \mathcal{S} W \, ,
\end{equation}
where we have introduced the \emph{super-BRST operator} \(\mathcal{S}\), defined via
\begin{equation}
	\mathcal{S} \coloneq S \circ \overline{S} \equiv - \overline{S} \circ S \equiv \frac{1}{2} \left ( S \circ \overline{S} - \overline{S} \circ S \right ) \, .
\end{equation}
The equivalent expressions are due to the property \(\commutatorbig{S}{\overline{S}} \equiv 0\), cf.\ \cite[Corollaries 3.4 and 4.4]{Prinz_5}.\footnote{We emphasize that we use the symbol \(\left [ \, \cdot \, , \cdot \, \right ]\) for the supercommutator: In particular, it denotes the anticommutator if both arguments are odd.} We remark that the so-constructed Lagrange density still contains the corresponding Lautrup--Nakanishi auxiliary field: This field is neither Hermitian nor anti-Hermitian with respect to the ghost conjugation. However, it can be shifted to become anti-Hermitian. Once it is eliminated via its equations of motion after this shift, we obtain the following symmetric setting:
\begin{equation} \label{eqn:gf-sym-ghost}
\begin{split}
	\mathcal{L}_\text{GF-Sym-Ghost} & \coloneq \left ( - \frac{1}{2 \alpha^2 \lambda} \operatorname{GF} \left ( \boldsymbol{\varphi} \right )^2 + \frac{1}{2} \left ( \overline{\theta} \cdot \operatorname{GF} \left ( \delta_\theta \boldsymbol{\varphi} \right ) + \operatorname{GF} \left ( \delta_{\overline{\theta}} \boldsymbol{\varphi} \right ) \cdot \theta \right ) \right ) \dif V_g \\ & \phantom{\coloneq} + \frac{\alpha^2 \lambda}{16} \left ( \commutatorbig{ \, \overline{\theta}}{\overline{\theta} \, } \cdot \commutatorbig{ \, \theta}{\theta \, } \right ) \dif V_g
\end{split}
\end{equation}
We remark that the gauge fixing functional \(\operatorname{GF} \left ( \boldsymbol{\varphi} \right )\) for a given gauge theory and gauge fixing boson is now determined to be an \emph{optimal gauge fixing}, a notion that has been introduced by the author in the follow-up article \cite{Prinz_7}: In particular, there it is shown that for General Relativity this is given as the (linearized) de Donder gauge fixing condition and for Yang--Mills theory this is given as the (covariant) Lorenz gauge fixing condition. Additionally, we highlight the newly appearing four-ghost-interaction in addition to the symmetrized Faddeev--Popov construction. Notably, in \cite{Prinz_5,Prinz_7} and the present article, we use the convention that the longitudinal mode of the gauge boson propagator as well as the ghost propagator are both scaled by the gauge fixing parameter \(\lambda\), which in this convention also appears as a prefactor of said four-ghost-interaction, cf.\ \cite{Baulieu_Thierry-Mieg} for comparison. More generally, this construction can then be embedded into a homotopy between different ghost constructions by introducing a \emph{ghost parameter} \(\varrho\), resulting in the following Lagrange density:
\begin{equation} \label{eqn:gf-hom-ghost}
\begin{split}
	\mathcal{L}_\text{GF-Hom-Ghost} \left ( \varrho \right ) & \coloneq \left ( - \frac{1}{2 \alpha^2 \lambda} \operatorname{GF} \left ( \boldsymbol{\varphi} \right )^2 + \frac{1}{2} \left ( \left ( 1 - \varrho \right ) \overline{\theta} \cdot \operatorname{GF} \left ( \delta_\theta \boldsymbol{\varphi} \right ) + \varrho \operatorname{GF} \left ( \delta_{\overline{\theta}} \boldsymbol{\varphi} \right ) \cdot \theta \right ) \right ) \dif V_g \\ & \phantom{\coloneq} + \frac{\alpha^2 \lambda \, \varrho \left ( 1 - \varrho \right )}{4} \left ( \commutatorbig{ \, \overline{\theta}}{\overline{\theta} \, } \cdot \commutatorbig{ \, \theta}{\theta \, } \right ) \dif V_g
\end{split}
\end{equation}
Observe that \(\varrho = 0\) corresponds to the Faddeev--Popov construction, displayed in \eqnref{eqn:gf-fp-ghost}, \(\varrho = \textfrac{1}{2}\) to the symmetric setting, displayed in \eqnref{eqn:gf-sym-ghost}, and \(\varrho = 1\) to the opposed Faddeev--Popov construction, i.e.\ the ghost conjugation of \eqnref{eqn:gf-fp-ghost}.

More specifically, we consider General Relativity, given via the Lagrange density:
\begin{equation} \label{eqn:gr_lagrange_density_introduction}
	\mathcal{L}_\text{GR} \coloneq - \frac{1}{2 \varkappa^2} R \dif V_g
\end{equation}
In particular, we consider the metric expansion \(g_{\mu \nu} \equiv \eta_{\mu \nu} + \varkappa h_{\mu \nu}\), where \(h_{\mu \nu}\) is the graviton field and \(\varkappa \coloneq \sqrt{\kappa}\) the graviton coupling constant (with \(\kappa \coloneq 8 \pi G\) being the Einstein gravitational constant). In addition, \(R \coloneq g^{\nu \sigma} \tensor{R}{^\mu _\nu _\mu _\sigma}\) is the Ricci scalar, where \(\tensor{R}{^\rho _\sigma _\mu _\nu} \coloneq \partial_\mu \tensor{\Gamma}{^\rho _\nu _\sigma} - \partial_\nu \tensor{\Gamma}{^\rho _\mu _\sigma} + \tensor{\Gamma}{^\rho _\mu _\lambda} \tensor{\Gamma}{^\lambda _\nu _\sigma} - \tensor{\Gamma}{^\rho _\nu _\lambda} \tensor{\Gamma}{^\lambda _\mu _\sigma}\) is the Riemann tensor with \(\tensor{\Gamma}{^\rho _\mu _\nu} = g^{\rho \sigma} \left ( \partial_\mu g_{\sigma \nu} + \partial_\nu g_{\mu \sigma} - \partial_\sigma g_{\mu \nu} \right ) / 2\) the Christoffel symbol. Furthermore, \(\dif V_g \coloneq \sqrt{- \dt{g}} \dif t \wedge \dif x \wedge \dif y \wedge \dif z\) denotes the Riemannian volume form and \(\dif V_\eta \coloneq \dif t \wedge \dif x \wedge \dif y \wedge \dif z\) the Minkowskian volume form. Then we obtain the following result in \propref{prop:symmetric-ghost-qgr}: Starting with the gauge fixing boson
\begin{equation} \label{eqn:qgr-gauge-fixing-boson-introduction}
	F \coloneq - \frac{\zeta}{4} \left ( \frac{1}{\varkappa} \eta^{\mu \nu} h_{\mu \nu} - \overline{C}^\rho C_\rho \right ) \dif V_\eta
\end{equation}
we obtain the following symmetric gauge fixing and ghost Lagrange density, where \(\zeta\) denotes the de Donder gauge fixing parameter and \(\mathcal{P} \coloneq P \circ \overline{P}\) is the diffeomorphism super-BRST operator:\footnote{We remark that the prefactor of the four-ghost-interaction is \(\textfrac{1}{32}\) times the product of the commutators, cf.\ \eqnref{eqn:gf-sym-ghost} --- however, since the resulting four terms can be summarized into a single term, the prefactor then becomes the indicated \(\textfrac{1}{8}\).}
\begin{equation} \label{eqn:Lagrange-GR-GF-Ghost}
\begin{split}
	\mathcal{L}_\text{GR-GF-Sym-Ghost} & \coloneq \mathcal{P} F \\
	& \phantom{:} \equiv \frac{1}{2 \zeta} \left ( - \frac{1}{2 \varkappa^2} \eta^{\mu \nu} \deDonder^{(1)}_\mu \deDonder^{(1)}_\nu + \eta^{\mu \nu} \bigl ( \partial_\mu \overline{C}^\rho \bigr ) \bigl ( \partial_\nu C_\rho \bigr ) \right ) \dif V_\eta \\
	& \phantom{\coloneq} + \frac{1}{4} \eta^{\mu \nu} \left ( \bigl ( \partial_\rho \overline{C}^\rho \bigr ) \bigl ( \Gamma_{\sigma \mu \nu} C^\sigma \bigr ) - 2 \bigl ( \partial_\mu \overline{C}^\rho \bigr ) \bigl ( \Gamma_{\sigma \rho \nu} C^\sigma \bigr ) \right ) \dif V_\eta \\
	& \phantom{\coloneq} + \frac{1}{4} \eta^{\mu \nu} \left ( \bigl ( \Gamma_{\sigma \mu \nu} \overline{C}^\sigma \bigr ) \bigl ( \partial_\rho C^\rho \bigr ) - 2 \bigl ( \Gamma_{\sigma \rho \nu} \overline{C}^\sigma \bigr ) \bigl ( \partial_\mu C^\rho \bigr ) \right ) \dif V_\eta \\
	& \phantom{\coloneq} + \frac{\varkappa^2 \zeta}{8} \eta_{\mu \nu} \left ( \overline{C}^\rho \bigl ( \partial_\rho \overline{C}^\mu \bigr ) \right ) \left ( C^\sigma \bigl ( \partial_\sigma C^\nu \bigr ) \right ) \dif V_\eta
\end{split}
\end{equation}
Here, \(\deDonder^{(1)}_\mu \coloneq \eta^{\rho \sigma} \Gamma_{\mu \rho \sigma} \equiv 0\) is the linearized de Donder gauge fixing functional with \(\Gamma_{\mu \rho \sigma} \coloneq \textfrac{\varkappa \left ( \partial_\rho h_{\mu \sigma} + \partial_\sigma h_{\rho \mu} - \partial_\mu h_{\rho \sigma} \right )}{2}\) and \(C_\mu\) and \(\overline{C}^\mu\) are the graviton-ghost and graviton-antighost fields, respectively. Finally, the Lagrange density for (effective) Quantum General Relativity is then given as the sum of the two:
\begin{equation}
	\mathcal{L}_\text{QGR} \coloneq \mathcal{L}_\text{GR} + \mathcal{L}_\text{GR-GF-Ghost}
\end{equation}
In addition, in \thmref{thm:symmetric-ghost-homotopy-qgr} we also construct a homotopy in the sense of \eqnref{eqn:gf-hom-ghost} that continuously interpolates between the corresponding Faddeev--Popov construction, cf.\ \cite[Corollary 3.7]{Prinz_5}, the symmetric setting of \propref{prop:symmetric-ghost-qgr} and the opposed Faddeev--Popov construction, where we introduce the \emph{graviton-ghost parameter} \(\varsigma\):
\begin{equation}
\begin{split}
	\mathcal{L}_\textup{GR-GF-Ghost} \left ( \varsigma \right ) & \coloneq \frac{1}{2 \zeta} \left ( - \frac{1}{2 \varkappa^2} \eta^{\mu \nu} \deDonder^{(1)}_\mu \deDonder^{(1)}_\nu + \eta^{\mu \nu} \bigl ( \partial_\mu \overline{C}^\rho \bigr ) \bigl ( \partial_\nu C_\rho \bigr ) \right ) \dif V_\eta \\
	& \phantom{\coloneq} + \frac{\left ( 1 - \varsigma \right )}{2} \eta^{\mu \nu} \left ( \bigl ( \partial_\rho \overline{C}^\rho \bigr ) \bigl ( \Gamma_{\sigma \mu \nu} C^\sigma \bigr ) - 2 \bigl ( \partial_\mu \overline{C}^\rho \bigr ) \bigl ( \Gamma_{\sigma \rho \nu} C^\sigma \bigr ) \right ) \dif V_\eta \\
	& \phantom{\coloneq} + \frac{\varsigma}{2} \eta^{\mu \nu} \left ( \bigl ( \Gamma_{\sigma \mu \nu} \overline{C}^\sigma \bigr ) \bigl ( \partial_\rho C^\rho \bigr ) - 2 \bigl ( \Gamma_{\sigma \rho \nu} \overline{C}^\sigma \bigr ) \bigl ( \partial_\mu C^\rho \bigr ) \right ) \dif V_\eta \\
	& \phantom{\coloneq} + \frac{\varkappa^2 \zeta \varsigma \left ( 1 - \varsigma \right )}{2} \eta_{\mu \nu} \left ( \overline{C}^\rho \bigl ( \partial_\rho \overline{C}^\mu \bigr ) \right ) \left ( C^\sigma \bigl ( \partial_\sigma C^\nu \bigr ) \right ) \dif V_\eta
\end{split}
\end{equation}
Moreover, we discuss other valid choices for gauge fixing bosons in \remref{rem:qgr-gauge-fixing-boson}.

Additionally, we consider Yang--Mills theory, given via the Lagrange density:
\begin{equation} \label{eqn:ym_lagrange_density_introduction}
	\mathcal{L}_\text{YM} \coloneq - \frac{1}{4 \mathrm{g}^2} \delta_{a b} g^{\mu \nu} g^{\rho \sigma} F^a_{\mu \rho} F^b_{\nu \sigma} \dif V_g
\end{equation}
Here, \(F^a_{\mu \nu} \coloneq \mathrm{g} \bigl ( \partial_\mu A^a_\nu - \partial_\nu A^a_\mu \bigr ) - \mathrm{g}^2 \tensor{f}{^a _b _c} A^b_\mu A^c_\nu\) is the local curvature form of the gauge boson \(A^a_\mu\) and \(\mathrm{g}\) the gauge boson coupling constant. Furthermore, \(\dif V_g \coloneq \sqrt{- \dt{g}} \dif t \wedge \dif x \wedge \dif y \wedge \dif z\) denotes again the Riemannian volume form. Then we obtain the following result in \propref{prop:symmetric-ghost-qym}: Starting with the gauge fixing boson
\begin{equation} \label{eqn:qym-gauge-fixing-boson-introduction}
	G \coloneq - \frac{\xi}{2} \left ( \delta_{a b} g^{\mu \nu} A^a_\mu A^b_\nu - \overline{c}_a c^a \right ) \dif V_g
\end{equation}
we obtain the following symmetric gauge fixing and ghost Lagrange density, where \(\xi\) denotes the Lorenz gauge fixing parameter and \(\mathcal{Q} \coloneq Q \circ \overline{Q}\) is the gauge super-BRST operator:
\begin{equation} \label{eqn:Lagrange-YM-GF-Ghost}
	\begin{split}
		\mathcal{L}_\text{YM-GF-Sym-Ghost} & \coloneq \mathcal{Q} G \\
		& \phantom{:} \equiv \frac{1}{\xi} \left ( - \frac{1}{2 \mathrm{g}^2} \delta_{ab} \Lorenz^a \Lorenz^b + g^{\mu \nu} \left ( \partial_\mu \overline{c}_a \right ) \left ( \partial_\nu c^a \right ) \right ) \dif V_g \\
		& \phantom{\coloneq} + \frac{\mathrm{g}}{2} g^{\mu \nu} \tensor{f}{^a _b _c} \left ( \bigl ( \partial_\mu \overline{c}_a \bigr ) \bigl ( c^b A^c_\nu \bigr ) + \bigl ( \overline{c}_a A^b_\nu \bigr ) \bigl ( \partial_\mu c^c \bigr ) \right ) \dif V_g \\
		& \phantom{\coloneq} + \frac{\mathrm{g}^2 \xi}{16} f^{a b c} f_{ade} \overline{c}_b \overline{c}_c c^d c^e \dif V_g
	\end{split}
\end{equation}
Here, \(\Lorenz^a \coloneq \mathrm{g} g^{\mu \nu} \bigl ( \nabla^{TM}_\mu A^a_\nu \bigr ) \equiv 0\) is the covariant Lorenz gauge fixing functional and \(c^a\) and \(\overline{c}_a\) are the gauge ghost and gauge antighost fields, respectively. Finally, the Lagrange density for Quantum Yang--Mills theory is then given as the sum of the two:
\begin{equation}
	\mathcal{L}_\text{QYM} \coloneq \mathcal{L}_\text{YM} + \mathcal{L}_\text{YM-GF-Ghost}
\end{equation}
In addition, in \thmref{thm:symmetric-ghost-homotopy-qym} we also construct a homotopy in the sense of \eqnref{eqn:gf-hom-ghost} that continuously interpolates between the corresponding Faddeev--Popov construction, cf.\ \cite[Corollary 4.6]{Prinz_5}, the symmetric setting of \propref{prop:symmetric-ghost-qym} and the opposed Faddeev--Popov construction, where we introduce the \emph{gauge ghost parameter} \(\vartheta\):
\begin{equation}
\begin{split}
	\mathcal{L}_\textup{YM-GF-Ghost} \left ( \vartheta \right ) & \coloneq \frac{1}{\xi} \left ( - \frac{1}{2 \mathrm{g}^2} \delta_{ab} \Lorenz^a \Lorenz^b + g^{\mu \nu} \left ( \partial_\mu \overline{c}_a \right ) \left ( \partial_\nu c^a \right ) \right ) \dif V_g \phantom{\bigg \vert} \\
	& \phantom{\coloneq} + \frac{\mathrm{g}}{2} g^{\mu \nu} \tensor{f}{^a _b _c} \left ( \left ( 1 - \vartheta \right ) \bigl ( \partial_\mu \overline{c}_a \bigr ) \bigl ( c^b A^c_\nu \bigr ) + \vartheta \bigl ( \overline{c}_a A^b_\nu \bigr ) \bigl ( \partial_\mu c^c \bigr ) \right ) \dif V_g \phantom{\bigg \vert} \\
	& \phantom{\coloneq} + \frac{\mathrm{g}^2 \xi \vartheta \left ( 1 - \vartheta \right )}{4} f^{a b c} f_{ade} \overline{c}_b \overline{c}_c c^d c^e \dif V_g \phantom{\bigg \vert}
\end{split}
\end{equation}
Moreover, to complement the theoretical constructions and insights, we work out the specific cases of a \(\operatorname{SU}(2)\) gauge group and a Schwarzschild spacetime in \exsaref{exmp:qym-su-two}{exmp:qym-schwarzschild}.

Finally, for the coupling of (effective) Quantum General Relativity to Quantum Yang--Mills theory, we obtain the following results: Given the \emph{total BRST operator} \(D \coloneq P + Q\) from \cite[Theorem 5.1]{Prinz_5} and the \emph{total anti-BRST operator} \(\overline{D} \coloneq \overline{P} + \overline{Q}\) from \cite[Corollary 5.2]{Prinz_5} and let \(\mathcal{D} \coloneq D \circ \overline{D}\) be the \emph{total super-BRST operator} from \defnref{defn:super-brst-operator}. Then we can generate the complete gauge fixing and ghost Lagrange density using a \emph{total gauge fixing boson} \(Y \coloneq F + G\) via \(\mathcal{D} Y\), cf.\ \thmref{thm:total_gf_ghost_lagrange_density}. In addition, we also obtain a double-homotopy by adding the two individual homotopies, cf.\ \colref{col:total-setup-double-homotopy}.

We refer to \cite{Prinz_2,Prinz_4,Prinz_8} for more detailed introductions to (effective) Quantum General Relativity coupled to Quantum Yang--Mills theories using the same conventions. In addition, we refer to \cite{Prinz_5} for the introduction of the diffeomorphism-gauge BRST double complex and its corresponding anti-BRST complex. We will use the ghost Lagrange densities from this article in \cite{Prinz_7} to study the cancellation identities for (effective) Quantum General Relativity coupled to the Standard Model. This provides an important ingredient to study the renormalization of Quantum Gauge Theories, cf.\ \cite{Prinz_9,Kreimer_QG1,vSuijlekom_BV,Prinz_3}. Moreover, we refer the interested reader to the introductory texts on BRST cohomology and the BV formalism \cite{Barnich_Brandt_Henneaux_1,Mnev,Wernli}, the historical overview \cite{Becchi} and earlier works in a similar direction \cite{Baulieu_Thierry-Mieg,Nakanishi_Ojima,Faizal,Baulieu_Bellon_1,Baulieu_Bellon_2,Shestakova,Barnich_Brandt_Henneaux_2,Upadhyay}.

This article is related to the author's dissertation \cite{Prinz_PhD}.

\section{General setup} \label{sec:general-setup}

We start this article with a brief summary of the diffeomorphism-gauge BRST double complex, which was introduced in \cite{Prinz_5}: This includes the definitions of the diffeomorphism, gauge and total BRST and anti-BRST operators. Then we introduce the ghost conjugations and discuss its action on the diffeomorphism and gauge Lautrup--Nakanishi auxiliary fields. In particular, we shift them such that they become anti-Hermitian with respect to their associated ghost conjugation. We refer to \cite[Section 2]{Prinz_5} for a detailed mathematical introduction to the fields as well as the above mentioned BRST and anti-BRST operators using graded supergeometry with cohomological and homological vector fields, respectively. Finally, we remark that the following constructions work on general spacetime manifolds with any dimension and structure constants of arbitrary compact semisimple Lie algebras.

\enter

\begin{defn}[Spacetime] \label{def:spacetime}
	Let \((M,g)\) be a \(d\)-dimensional Lorentzian manifold. We call \((M,g)\) a spacetime, if it is smooth, connected and time-orientable.
\end{defn}

\enter

\begin{defn}[Spacetime-matter bundle] \label{defn:spacetime-matter_bundle}
	Let \((M, g)\) be a \(d\)-dimensional spacetime and \(G\) a compact and semisimple Lie group with Lie algebra \(\mathfrak{g}\).\footnote{We remind the reader that this implies that its \emph{Killing form} is negative-definite: Thus, it ensures that the Yang--Mills Lagrange density is non-negative and that there are no additional zero-modes.} Then we define the spacetime-matter bundle of (effective) Quantum General Relativity coupled to Quantum Yang--Mills theory as the \(\mathbb{Z}^2\)-graded super bundle \(\beta_\Q \colon \BQ \to M\) (some further applications might require the bundle to be trivial, but for the constructions in this article and the local considerations in physics a general bundle is fine), where \(\BQ \coloneq M \times_M \mathcal{V}_\Q\) is the fiber product over \(M\) with
	\begin{equation}
	\begin{split}
		\mathcal{V}_\Q & \coloneq \left ( \operatorname{Sym}^2_\mathbb{R} \left ( T^* M \right ) \right )^{\times 3} \times \Bigl ( T^* [1,0] M \oplus T [-1,0] M \oplus T M \Bigr ) \\ & \phantom{\coloneq} \times \left ( T^*M \otimes_\mathbb{R} \mathfrak{g} \right ) \times \Bigl ( \mathfrak{g} [0,1] \oplus \mathfrak{g}^* [0,-1] \oplus \mathfrak{g}^* \Bigr ) \, , \label{eqn:spacetime-matter_bundle}
	\end{split}
	\end{equation}
	where we have the following bundles:
	\begin{itemize}
		\item Metric, background metric and graviton field as a section in the triple Cartesian product \(\bigl ( \operatorname{Sym}^2_\mathbb{R} \left ( T^* M \right ) \bigr )^{\times 3} \coloneq \bigtimes_{m = 1}^3 \bigl ( \operatorname{Sym}^2_\mathbb{R} \left ( T^* M \right ) \bigr )\), where \(\operatorname{Sym}^2_\mathbb{R} \left ( T^* M \right ) \coloneq \left ( T^* M \otimes_\mathbb{R} T^* M \right ) / \mathbb{Z}_2\) is the symmetrized tensor product
		\item Graviton-ghost as a section in \(T^* [1,0] M\)
		\item Graviton-antighost as a section in \(T [-1,0] M\)
		\item Graviton-Lautrup--Nakanishi field as a section in \(T M\)
		\item Gauge bosons as a section in \(T^*M \otimes_\mathbb{R} \mathfrak{g}\)
		\item Gauge ghost as a section in the bundle with fiber \(\mathfrak{g} [0,1]\)
		\item Gauge antighost as a section in the bundle with fiber \(\mathfrak{g}^* [0,-1]\)
		\item Gauge Lautrup--Nakanishi field as a section in the bundle with fiber \(\mathfrak{g}^*\)
	\end{itemize}
	Here, the ghosts are odd sections of either graviton-ghost degree \(\pm 1\) or gauge ghost degree \(\pm 1\), respectively, cf.\ \cite[Section 2]{Prinz_5} for the mathematical background and technical setup.
\end{defn}

\enter

\begin{defn}[Sheaf of particle fields] \label{defn:sheaf-of-particle-fields}
	Let \((M, g)\) be a spacetime with topology \(\mathcal{T}_M\) and \(\beta_\Q \colon \BQ \to M\) the spacetime-matter bundle from \defnref{defn:spacetime-matter_bundle}. Then we define the sheaf of particle fields via
	\begin{equation}
		\FQ \, : \quad \mathcal{T}_M \to \Gamma \left ( M, \BQ \right ) \, , \quad U \mapsto \Gamma \left ( U, B \right ) \, ,
	\end{equation}
	where \(B \subset \BQ\) is one of the subbundles from \eqnref{eqn:spacetime-matter_bundle}. More precisely, we consider the following fields:
	\begin{itemize}
		\item Lorentzian metrics \(g \in \operatorname{LorMet} \left ( M \right ) \subset \Gamma \bigl ( M, \operatorname{Sym}^2_\mathbb{R} \left ( T^* M \right ) \bigr )\)
		\item Minkowski background metric \(\eta \in \operatorname{LorMet} \left ( M \right ) \subset \Gamma \bigl ( M, \operatorname{Sym}^2_\mathbb{R} \left ( T^* M \right ) \bigr )\)
		\item Graviton fields \(\varkappa h \coloneq \left ( g - \eta \right ) \in \operatorname{Grav} \left ( M \right ) \subset \Gamma \bigl ( M, \operatorname{Sym}^2_\mathbb{R} \left ( T^* M \right ) \bigr )\), where \(\varkappa\) is the graviton coupling constant
		\item Gauge boson fields \(\imaginary \mathrm{g} A \in \operatorname{Conn} \left ( M, \mathfrak{g} \right ) \subset \Omega^1 \left ( M, \mathfrak{g} \right )\), where \(\imaginary \coloneq \sqrt{-1}\) is the imaginary unit and \(\mathrm{g}\) is the gauge boson coupling constant
		\item Graviton-ghost fields \(C \in \sctnbig{M, T^* [1,0] M}\)
		\item Graviton-antighost fields \(\overline{C} \in \sctnbig{M, T [-1,0] M}\)
		\item Graviton-Lautrup--Nakanishi auxiliary fields \(B \in \mathfrak{X} \left ( M \right )\)
		\item Gauge ghost fields \(c \in \sctnbig{M, M \times \mathfrak{g} [0,1]}\)
		\item Gauge antighost fields \(\overline{c} \in \sctnbig{M, M \times \mathfrak{g}^* [0,-1]}\)
		\item Gauge Lautrup--Nakanishi auxiliary fields \(b \in \sctnbig{M, M \times \mathfrak{g}^*}\)
	\end{itemize}
	Specifically, given a metric \(g_{\mu \nu}\) and the Minkowski background metric \(\eta_{\mu \nu}\), the graviton field \(h_{\mu \nu}\) is then defined as their difference, rescaled by the graviton coupling constant \(\varkappa \coloneq \sqrt{\kappa}\), with \(\kappa \coloneq 8 \pi G\) the Einstein constant and \(G\) the Newton constant:
	\begin{equation} \label{eqn:metric_decomposition}
		h_{\mu \nu} \coloneq \frac{1}{\varkappa} \left ( g_{\mu \nu} - \eta_{\mu \nu} \right ) \iff g_{\mu \nu} \equiv \eta_{\mu \nu} + \varkappa h_{\mu \nu} \, .
	\end{equation}
	Thus, the graviton field \(h_{\mu \nu}\) is given as a rescaled, symmetric \((0,2)\)-tensor field, i.e.\ a section \(\varkappa h \in \Gamma \bigl ( M, \operatorname{Sym}^2_\mathbb{R} \left ( T^* M \right ) \bigr )\). We remark that the Lorentz indices of the graviton field, the graviton-ghost and -antighost as well as the  corresponding Lautrup--Nakanishi auxiliary field are raised and lowered via the Minkowski background metric \(\eta_{\mu \nu}\) and its inverse \(\eta^{\mu \nu}\). Contrary, the Lorentz indices of the gauge field and all other particle fields are raised and lowered via the metric \(g_{\mu \nu}\) and its inverse \(g^{\mu \nu}\). Finally, the color indices of the gauge related fields are raised and lowered via the color metric \(\delta_{a b}\) and its inverse \(\delta^{a b}\).
\end{defn}

\enter

\begin{defn}[Diffeomorphism (anti-)BRST operator] \label{defn:diffeomorphism_brst_operator}
	We define the diffeomorphism BRST operator \(P\) as the following odd vector field on the spacetime-matter bundle with graviton-ghost degree 1:
	\begin{equation}
	\begin{split}
		P & \coloneq \frac{1}{\zeta} \left ( \nabla^{TM}_\mu C_\nu + \nabla^{TM}_\nu C_\mu \right ) \frac{\partial}{\partial h_{\mu \nu}} + \varkappa C^\rho \left ( \partial_\rho C^\sigma \right ) \frac{\partial}{\partial C^\sigma} + \frac{1}{\zeta} B^\sigma \frac{\partial}{\partial \overline{C}^\sigma} \\
		& \phantom{\coloneq} + \varkappa \sum_{\varphi \in \FQ} \left ( \Lie_C \varphi \right ) \frac{\partial}{\partial \varphi}
	\end{split}
	\end{equation}
	Equivalently, its action on fundamental particle fields is given as follows:
	{\allowdisplaybreaks
	\begin{subequations}
	\begin{align}
		\begin{split}
		P h_{\mu \nu} & \coloneq \frac{1}{\zeta} \left ( \nabla^{TM}_\mu C_\nu + \nabla^{TM}_\nu C_\mu \right ) \\ & \phantom{:} \equiv \frac{1}{\zeta} \left ( C^\rho \left ( \partial_\rho g_{\mu \nu} \right ) + \left ( \partial_\mu C^\rho \right ) g_{\rho \nu} + \left ( \partial_\nu C^\rho \right ) g_{\mu \rho} \right )
		\end{split} \\
		\begin{split}
		P g_{\mu \nu} & \coloneq \varkappa \left ( \nabla^{TM}_\mu C_\nu + \nabla^{TM}_\nu C_\mu \right ) \\ & \phantom{:} \equiv \varkappa \left ( C^\rho \left ( \partial_\rho g_{\mu \nu} \right ) + \left ( \partial_\mu C^\rho \right ) g_{\rho \nu} + \left ( \partial_\nu C^\rho \right ) g_{\mu \rho} \right )
		\end{split} \\
		P C^\rho & \coloneq \varkappa C^\sigma \left ( \partial_\sigma C^\rho \right ) \\
		P \overline{C}^\rho & \coloneq \frac{1}{\zeta} B^\rho \\
		P B^\rho & \coloneq 0 \\
		P \eta_{\mu \nu} & \coloneq 0 \\
		P \varphi & \coloneq \varkappa \left ( \Lie_C \varphi \right )
	\end{align}
	\end{subequations}
	}%
	Here, \(\Lie_C\) denotes the Lie derivative with respect to the graviton-ghost, \(\varphi\) any other particle field and \(\FQ\) the set of all such fields. Additionally, we define the diffeomorphism anti-BRST operator \(\overline{P}\) as the following odd vector field on the spacetime-matter bundle with graviton-ghost degree -1:
	\begin{subequations}
	\begin{align}
		\overline{P} & \coloneq \eval{P}_{C \rightsquigarrow \overline{C}}
		\intertext{together with the following additional changes}
		\overline{P} C^\rho & \coloneq - \frac{1}{\zeta} B^\rho + \varkappa \left ( \overline{C}^\sigma \left ( \partial_\sigma C^\rho \right ) - \bigl ( \partial_\sigma \overline{C}^\rho \bigr ) C^\sigma \right ) \\
		\overline{P} \overline{C}^\rho & \coloneq \varkappa \overline{C}^\sigma \bigl ( \partial_\sigma \overline{C}^\rho \bigr ) \\
		\overline{P} B^\rho & \coloneq \varkappa \left ( \overline{C}^\sigma \left ( \partial_\sigma B^\rho \right ) - \bigl ( \partial_\sigma \overline{C}^\rho \bigr ) B^\sigma \right )
	\end{align}
	\end{subequations}
	We remark the characteristic identities \(\commutatorbig{P}{P} = \commutatorbig{P}{\overline{P}} = \commutatorbig{\overline{P}}{\overline{P}} = 0\).
\end{defn}

\enter

\begin{defn}[Gauge (anti-)BRST operator] \label{defn:gauge_brst_operator}
	We define the gauge BRST operator \(Q\) as the following odd vector field on the spacetime-matter bundle with gauge ghost degree 1:
	\begin{equation}
	\begin{split}
		Q & \coloneq \left ( \frac{1}{\xi} \partial_\mu c^a + \mathrm{g} \tensor{f}{^a _b _c} c^b A_\mu^c \right ) \frac{\partial}{\partial A_\mu^a} + \frac{\mathrm{g}}{2} \tensor{f}{^a _b _c} c^b c^c \frac{\partial}{\partial c^a} + \frac{1}{\xi} b^a \frac{\partial}{\partial \overline{c}^a} \\ & \phantom{\coloneq} + \varkappa \sum_{\varphi \in \FQ} \left ( \lie_c \varphi \right ) \frac{\partial}{\partial \varphi}
	\end{split}
	\end{equation}
	Equivalently, its action on fundamental particle fields is given as follows:
	{\allowdisplaybreaks
	\begin{subequations}
	\begin{align}
		Q A_\mu^a & \coloneq \frac{1}{\xi} \partial_\mu c^a + \mathrm{g} \tensor{f}{^a _b _c} c^b A_\mu^c \\
		Q c^a & \coloneq \frac{\mathrm{g}}{2} \tensor{f}{^a _b _c} c^b c^c \\
		Q \overline{c}^a & \coloneq \frac{1}{\xi} b^a \\
		Q b^a & \coloneq 0 \\
		Q \delta_{ab} & \coloneq 0 \\
		Q \varphi & \coloneq \mathrm{g} \left ( \lie_c \varphi \right )
	\end{align}
	\end{subequations}
	}%
	Here, \(\lie_c\) denotes the Lie derivative with respect to the gauge ghost, \(\varphi\) any other particle field and \(\FQ\) the set of all such fields. Additionally, we define the gauge anti-BRST operator \(\overline{Q}\) as the following odd vector field on the spacetime-matter bundle with gauge ghost degree -1:
	\begin{subequations}
	\begin{align}
		\overline{Q} & \coloneq \eval{Q}_{c \rightsquigarrow \overline{c}}
		\intertext{together with the following additional changes}
		\overline{Q} c^a & \coloneq - \frac{1}{\xi} b^a + \mathrm{g} \tensor{f}{^a _b _c} \overline{c}^b c^c \\
		\overline{Q} \overline{c}^a & \coloneq \frac{\mathrm{g}}{2} \tensor{f}{^a _b _c} \overline{c}^b \overline{c}^c \\
		\overline{Q} b^a & \coloneq \mathrm{g} \tensor{f}{^a _b _c} \overline{c}^b b^c
	\end{align}
	\end{subequations}
	We remark the characteristic identities \(\commutatorbig{Q}{Q} = \commutatorbig{Q}{\overline{Q}} = \commutatorbig{\overline{Q}}{\overline{Q}} = 0\).
\end{defn}

\enter

\begin{defn}[Total (anti-)BRST operator\footnote{The \emph{total BRST operator} and \emph{total anti-BRST operator} have been introduced and studied in \cite[Section 5]{Prinz_5}.}] \label{defn:total_brst_operator}
	Let \(P\) and \(Q\) be the diffeomorphism and gauge BRST operators from \defnsaref{defn:diffeomorphism_brst_operator}{defn:gauge_brst_operator}, we call their sum
	\begin{equation}
		D \coloneq P + Q
	\end{equation}
	the \emph{total BRST operator}. In addition, let \(\overline{P}\) and \(\overline{Q}\) be the diffeomorphism and gauge anti-BRST operators from \defnsaref{defn:diffeomorphism_brst_operator}{defn:gauge_brst_operator}, we call their sum
	\begin{equation}
		\overline{D} \coloneq \overline{P} + \overline{Q}
	\end{equation}
	the \emph{total anti-BRST operator}. We remark that both operators are indeed anticommuting differentials due to \cite[Theorem 5.1 and Corollary 5.2]{Prinz_5} and thus satisfy the characteristic identities \(\commutatorbig{D}{D} = \commutatorbig{D}{\overline{D}} = \commutatorbig{\overline{D}}{\overline{D}} = 0\).
\end{defn}

\enter

\begin{defn}[Super-BRST operators] \label{defn:super-brst-operator}
	Given the BRST operators \(D, P, Q\) and their corresponding anti-BRST operators \(\overline{D}, \overline{P}, \overline{Q}\), then we define the respective \emph{super-BRST operators} as follows:
	\begin{align}
		\mathcal{D} & \coloneq D \circ \overline{D} \, , \\
		\mathcal{P} & \coloneq P \circ \overline{P}
		\intertext{and}
		\mathcal{Q} & \coloneq Q \circ \overline{Q}
	\end{align}
	In particular, they are even vector fields on the spacetime-matter bundle with ghost degrees 0.
\end{defn}

\enter

\begin{rem}
	The super-BRST operators from \defnref{defn:super-brst-operator} are also nilpotent, i.e.\ satisfy
	\begin{equation}
		\mathcal{D}^2 = \mathcal{P}^2 = \mathcal{Q}^2 = 0 \, ,
	\end{equation}
	due to the anticommutativity and nilpotency of the BRST operators with their respective anti-BRST operators, cf.\ \cite[Corollaries 3.4, 4.4 and 5.2]{Prinz_5}.
\end{rem}

\enter

\begin{defn}[Ghost conjugation, anti-Hermitian auxiliary field] \label{defn:ghost-conjugation}
	We introduce the following three Hermitian involutions on the space of particle fields \(\FQ\): First, the graviton-ghost conjugation \(\dagger_C\) and the gauge ghost conjugation \(\dagger_c\) via (\(\varphi\) denotes again any other particle field):
	{\allowdisplaybreaks
	\begin{subequations}
	\begin{align}
		& \bigl ( C^\rho \bigr )^{\dagger_C} \coloneq \overline{C}^\rho && \bigl ( C^\rho \bigr )^{\dagger_c} \coloneq C^\rho \\
		& \bigl ( \overline{C}^\rho \bigr )^{\dagger_C} \coloneq C^\rho && \bigl ( \overline{C}^\rho \bigr )^{\dagger_c} \coloneq \overline{C}^\rho \\
		& \bigl ( B^\rho \bigr )^{\dagger_C} \coloneq - B^\rho - \varkappa \zeta \left ( \overline{C}^\sigma \bigl ( \partial_\sigma C^\rho \bigr ) - \bigl ( \partial_\sigma \overline{C}^\rho \bigr ) C^\sigma \right ) && \bigl ( B^\rho \bigr )^{\dagger_c} \coloneq B^\rho \\
		& \bigl ( {B^\prime}^\rho \bigr )^{\dagger_C} \coloneq - {B^\prime}^\rho && \bigl ( {B^\prime}^\rho \bigr )^{\dagger_c} \coloneq {B^\prime}^\rho \\
		& \bigl ( c^a \bigr )^{\dagger_C} \coloneq c^a && \bigl ( c^a \bigr )^{\dagger_c} \coloneq \overline{c}^a \\
		& \bigl ( \overline{c}^a \bigr )^{\dagger_C} \coloneq \overline{c}^a && \bigl ( \overline{c}^a \bigr )^{\dagger_c} \coloneq c^a \\
		& \bigl ( b^a \bigr )^{\dagger_C} \coloneq b^a && \bigl ( b^a \bigr )^{\dagger_c} \coloneq - b^a - \mathrm{g} \xi \tensor{f}{^a _b _c} \overline{c}^b c^c \\
		& \bigl ( {b^\prime}^a \bigr )^{\dagger_C} \coloneq {b^\prime}^a && \bigl ( {b^\prime}^a \bigr )^{\dagger_c} \coloneq - {b^\prime}^a \\
		& \bigl ( \partial_\mu \bigr )^{\dagger_C} \coloneq - \partial_\mu && \bigl ( \partial_\mu \bigr )^{\dagger_c} \coloneq - \partial_\mu \\
		& \bigl ( \tensor{\Gamma}{^\rho _\mu _\nu} \bigr )^{\dagger_C} \coloneq - \tensor{\Gamma}{^\rho _\mu _\nu} && \bigl ( \tensor{\Gamma}{^\rho _\mu _\nu} \bigr )^{\dagger_c} \coloneq - \tensor{\Gamma}{^\rho _\mu _\nu} \\
		& \bigl ( \imaginary \tensor{f}{^a _b _c} \bigr )^{\dagger_C} \coloneq - \imaginary \tensor{f}{^a _b _c} && \bigl ( \imaginary \tensor{f}{^a _b _c} \bigr )^{\dagger_c} \coloneq - \imaginary \tensor{f}{^a _b _c} \\
		& \bigl ( \varphi \bigr )^{\dagger_C} \coloneq \varphi && \bigl ( \varphi \bigr )^{\dagger_c} \coloneq \varphi
	\end{align}
	\end{subequations}
	}%
	Here, \({B^\prime}^\rho\) and \({b^\prime}^a\) are the shifted anti-Hermitian Lautrup--Nakanishi auxiliary fields, given as follows:
	\begin{align}
		{B^\prime}^\rho & \coloneq B^\rho - \frac{\varkappa \zeta}{2} \left ( \overline{C}^\sigma \bigl ( \partial_\sigma C^\rho \bigr ) - \bigl ( \partial_\sigma \overline{C}^\rho \bigr ) C^\sigma \right )
		\intertext{and}
		{b^\prime}^a & \coloneq b^a - \frac{\mathrm{g} \xi}{2} \tensor{f}{^a _b _c} \overline{c}^b c^c \, .
	\end{align}
	And then, finally, we introduce the total ghost conjugation \(\dagger\) as follows:
	{\allowdisplaybreaks
	\begin{subequations}
	\begin{align}
		& \bigl ( C^\rho \bigr )^\dagger \coloneq \overline{C}^\rho \\
		& \bigl ( \overline{C}^\rho \bigr )^\dagger \coloneq C^\rho \\
		& \bigl ( B^\rho \bigr )^\dagger \coloneq - B^\rho - \varkappa \zeta \left ( \overline{C}^\sigma \bigl ( \partial_\sigma C^\rho \bigr ) - \bigl ( \partial_\sigma \overline{C}^\rho \bigr ) C^\sigma \right ) \\
		& \bigl ( {B^\prime}^\rho \bigr )^\dagger \coloneq - {B^\prime}^\rho \\
		& \bigl ( c^a \bigr )^\dagger \coloneq \overline{c}^a \\
		& \bigl ( \overline{c}^a \bigr )^\dagger \coloneq c^a \\
		& \bigl ( b^a \bigr )^\dagger \coloneq - b^a - \mathrm{g} \xi \tensor{f}{^a _b _c} \overline{c}^b c^c \\
		& \bigl ( {b^\prime}^a \bigr )^\dagger \coloneq - {b^\prime}^a \\
		& \bigl ( \partial_\mu \bigr )^\dagger \coloneq - \partial_\mu \\
		& \bigl ( \tensor{\Gamma}{^\rho _\mu _\nu} \bigr )^\dagger \coloneq - \tensor{\Gamma}{^\rho _\mu _\nu} \\
		& \bigl ( \imaginary \tensor{f}{^a _b _c} \bigr )^\dagger \coloneq - \imaginary \tensor{f}{^a _b _c} \\
		& \bigl ( \varphi \bigr )^\dagger \coloneq \varphi
	\end{align}
	\end{subequations}
	}%
	In particular, the total ghost conjugation inverts simultaneously graviton-ghosts and gauge ghosts.
\end{defn}

\enter

\begin{rem}
	The super-BRST operators are anti-Hermitian with respect to their associated ghost conjugation:\footnote{In addition, all conjugated BRST operators act to the left.}
	\begin{subequations}
	\begin{align}
		\mathcal{P}^{\dagger_C} & \equiv - \mathcal{P} \\
		\mathcal{Q}^{\dagger_c} & \equiv - \mathcal{Q} \\
		\mathcal{D}^\dagger & \equiv - \mathcal{D}
	\end{align}
	\end{subequations}
	 In addition, we remark that the anti-BRST operators are related to their corresponding BRST operators via ghost-conjugation, cf.\ \cite[Lemma 5.7]{Prinz_5}:
	 \begin{subequations}
	 \begin{align}
		\overline{P} & \equiv P^{\dagger_C} \\
		\overline{Q} & \equiv Q^{\dagger_c} \\
		\overline{D} & \equiv D^{\dagger}
	\end{align}
	\end{subequations}
\end{rem}

\section{The case of (effective) Quantum General Relativity} \label{sec:case-qgr}

We calculate the symmetric gauge fixing and ghost Lagrange density for (effective) Quantum General Relativity with a linearized de Donder gauge fixing condition in \propref{prop:symmetric-ghost-qgr}. Then we relate the symmetric setting to the Faddeev--Popov and opposed Faddeev--Popov constructions in \thmref{thm:symmetric-ghost-homotopy-qgr}. Finally, we discuss further possible choices for the gauge fixing boson in \remref{rem:qgr-gauge-fixing-boson}.

\enter

\begin{prop} \label{prop:symmetric-ghost-qgr}
	The symmetric gauge fixing and ghost Lagrange density for (effective) Quantum General Relativity reads
	\begin{equation}
	\begin{split}
		\mathcal{L}_\textup{GR-GF-Sym-Ghost} & \coloneq \frac{1}{2 \zeta} \left ( - \frac{1}{2 \varkappa^2} \eta^{\mu \nu} \deDonder^{(1)}_\mu \deDonder^{(1)}_\nu + \eta^{\mu \nu} \bigl ( \partial_\mu \overline{C}^\rho \bigr ) \bigl ( \partial_\nu C_\rho \bigr ) \right ) \dif V_\eta \\
	& \phantom{\coloneq} + \frac{1}{4} \eta^{\mu \nu} \left ( \bigl ( \partial_\rho \overline{C}^\rho \bigr ) \bigl ( \Gamma_{\sigma \mu \nu} C^\sigma \bigr ) - 2 \bigl ( \partial_\mu \overline{C}^\rho \bigr ) \bigl ( \Gamma_{\sigma \rho \nu} C^\sigma \bigr ) \right ) \dif V_\eta \\
	& \phantom{\coloneq} + \frac{1}{4} \eta^{\mu \nu} \left ( \bigl ( \Gamma_{\sigma \mu \nu} \overline{C}^\sigma \bigr ) \bigl ( \partial_\rho C^\rho \bigr ) - 2 \bigl ( \Gamma_{\sigma \rho \nu} \overline{C}^\sigma \bigr ) \bigl ( \partial_\mu C^\rho \bigr ) \right ) \dif V_\eta \\
	& \phantom{\coloneq} + \frac{\varkappa^2 \zeta}{8} \eta_{\mu \nu} \left ( \overline{C}^\rho \bigl ( \partial_\rho \overline{C}^\mu \bigr ) \right ) \left ( C^\sigma \bigl ( \partial_\sigma C^\nu \bigr ) \right ) \dif V_\eta
	\end{split}
	\end{equation}
	with the linearized de Donder gauge fixing functional \(\deDonder^{(1)}_\mu \coloneq \eta^{\rho \sigma} \Gamma_{\mu \rho \sigma}\). It can be obtained from the gauge fixing boson
	\begin{equation} \label{eqn:qgr-gauge-fixing-boson}
		F \coloneq - \frac{\zeta}{4} \left ( \frac{1}{\varkappa} \eta^{\mu \nu} h_{\mu \nu} - \overline{C}^\rho C_\rho \right ) \dif V_\eta
	\end{equation}
	via \(\mathcal{P} F\), where \(\mathcal{P}\) is the diffeomorphism super-BRST operator.
\end{prop}

\begin{proof}
	The claimed statement results directly from the following calculations:\footnote{We emphasize the relation of the symmetric gauge fixing fermion \(\overline{P} F\) to the Faddeev--Popov gauge fixing fermion \(\stigma^{(1)}\), given in \cite[Equation (44)]{Prinz_5} using the same conventions:
	\begin{equation}
		\overline{P} F \equiv \stigma^{(1)} - \frac{\varkappa \zeta}{4} \overline{C}^\rho \bigl ( \partial_\rho \overline{C}^\sigma \bigr ) C_\sigma
	\end{equation}
	In addition, to achieve the symmetric setup, the Lautrup--Nakanishi auxiliary field needs to be shifted to become anti-Hermitian, cf.\ \eqnref{eqn:shifted_ln_field_gr}. This is implicitly included in the homotopy gauge fixing fermion of \eqnref{eqn:qgr-homotopy-gff}.}
	\begin{subequations}
	\begin{align}
		\begin{split}
			\overline{P} F & = - \frac{1}{4 \varkappa} \eta^{\mu \nu} \left ( \overline{C}^\rho \bigl ( \partial_\rho g_{\mu \nu} \bigr ) + \bigl ( \partial_\mu \overline{C}^\rho \bigr ) g_{\rho \nu} + \bigl ( \partial_\nu \overline{C}^\rho \bigr ) g_{\mu \rho} \right ) \dif V_\eta \\
			& \phantom{=} + \left ( \frac{\varkappa \zeta}{4} \left ( \overline{C}^\rho \bigl ( \partial_\rho \overline{C}^\sigma \bigr ) C_\sigma - \overline{C}^\rho \overline{C}^\sigma \bigl ( \partial_\sigma C_\rho \bigr ) + \overline{C}^\rho \bigl ( \partial_\sigma \overline{C}_\rho \bigr ) C^\sigma \right ) + \frac{1}{4} \overline{C}^\rho B_\rho \right ) \dif V_\eta \\
			& \simeq_\text{PI \& IDs} - \frac{1}{4 \varkappa} \eta^{\mu \nu} \left ( \overline{C}^\rho \bigl ( \partial_\rho g_{\mu \nu} \bigr ) - \overline{C}^\rho \bigl ( \partial_\mu g_{\rho \nu} \bigr ) - \overline{C}^\rho \bigl ( \partial_\nu g_{\mu \rho} \bigr ) \right ) \dif V_\eta \\
			& \phantom{=} + \left ( - \frac{\varkappa \zeta}{4} \overline{C}^\rho \bigl ( \partial_\rho \overline{C}^\sigma \bigr ) C_\sigma + \frac{1}{4} \overline{C}^\rho B_\rho \right ) \dif V_\eta \\
			& = \overline{C}^\rho \left ( \frac{1}{2 \varkappa} \deDonder_\rho - \frac{\varkappa \zeta}{4} \bigl ( \partial_\rho \overline{C}^\sigma \bigr ) C_\sigma + \frac{1}{4} B_\rho \right ) \dif V_\eta \, , \label{eqn:anti-p_f}
		\end{split}
		\intertext{where \(\simeq_\text{PI \& IDs}\) denotes equality modulo partial integration and Lie algebra identities (using that the Minkowski background metric \(\eta_{\mu \nu}\) is by definition invariant), and}
		\begin{split}
			\bigl ( P \circ \overline{P} \bigr ) F & \simeq_\text{PI \& IDs} \left ( \frac{1}{4 \zeta} B^\rho B_\rho + \frac{1}{2 \varkappa \zeta} B^\rho \deDonder_\rho + \frac{1}{2 \zeta} \eta^{\mu \nu} \bigl ( \partial_\mu \overline{C}^\rho \bigr ) \bigl ( \partial_\nu C_\rho \bigr ) \right ) \dif V_\eta \\
			& \phantom{\simeq} + \eta^{\mu \nu} \left ( \frac{1}{2} \bigl ( \partial_\rho \overline{C}^\rho \bigr ) \bigl ( \Gamma_{\sigma \mu \nu} C^\sigma \bigr ) - \bigl ( \partial_\mu \overline{C}^\rho \bigr ) \bigl ( \Gamma_{\sigma \rho \nu} C^\sigma \bigr ) \right ) \dif V_\eta \\
			& \phantom{\simeq} + \left ( - \frac{\varkappa}{4} B^\rho \bigl ( \partial_\rho \overline{C}^\sigma \bigr ) C_\sigma + \frac{\varkappa \zeta}{4} \overline{C}^\rho \bigl ( \partial_\rho B^\sigma \bigr ) C_\sigma \right ) \dif V_\eta \\
			& \phantom{\simeq} + \left ( \frac{\varkappa^2 \zeta}{16} \overline{C}^\mu \bigl ( \partial_\mu \overline{C}^\rho \bigr ) C^\nu \bigl ( \partial_\nu C_\rho \bigr ) \right ) \dif V_\eta \, , \label{eqn:p_anti-p_f}
		\end{split}
		\intertext{together with the replacement of the Lautrup--Nakanishi auxiliary field with its anti-Hermitian shift}
		B^\rho & \equiv {B^\prime}^\rho + \frac{\varkappa \zeta}{2} \left ( \overline{C}^\sigma \bigl ( \partial_\sigma C^\rho \bigr ) - \bigl ( \partial_\sigma \overline{C}^\rho \bigr ) C^\sigma \right ) \label{eqn:shifted_ln_field_gr}
		\intertext{and then finally eliminating the shifted auxiliary field \({B^\prime}^\rho\) by inserting its equation of motion}
		\operatorname{EoM} \bigl ( B^\prime_\rho \bigr ) & = \frac{1}{\varkappa} \deDonder_\rho \, ,
	\intertext{which are obtained as usual via an Euler--Lagrange variation of \eqnref{eqn:p_anti-p_f}, i.e.\ by solving}
	0 & \overset{!}{=} \left ( \left ( \frac{\partial}{\partial B^\prime_\rho} \right ) - \partial_\mu \left ( \frac{\partial}{\partial \, \bigl ( \partial_\mu B^\prime_\rho \bigr )} \right ) \right ) \mathcal{P} F \, ,
	\end{align}
	\end{subequations}
	where the second term vanishes identically, as \(B^\prime_\rho\) is a Lagrange multiplier and thus has no kinetic term after a suitable partial integration.
\end{proof}

\enter

\begin{thm} \label{thm:symmetric-ghost-homotopy-qgr}
	We obtain the following homotopy in \(\varsigma \in [0, 1]\) between the Faddeev--Popov construction \(\varsigma = 0\), the symmetric setting \(\varsigma = \textfrac{1}{2}\) and the opposed Faddeev--Popov construction \(\varsigma = 1\):
	\begin{equation} \label{eqn:qgr-gf-ghost-homotopy}
	\begin{split}
		\mathcal{L}_\textup{GR-GF-Ghost} \left ( \varsigma \right ) & \coloneq \frac{1}{2 \zeta} \left ( - \frac{1}{2 \varkappa^2} \eta^{\mu \nu} \deDonder^{(1)}_\mu \deDonder^{(1)}_\nu + \eta^{\mu \nu} \bigl ( \partial_\mu \overline{C}^\rho \bigr ) \bigl ( \partial_\nu C_\rho \bigr ) \right ) \dif V_\eta \\
		& \phantom{\coloneq} + \frac{\left ( 1 - \varsigma \right )}{2} \eta^{\mu \nu} \left ( \bigl ( \partial_\rho \overline{C}^\rho \bigr ) \bigl ( \Gamma_{\sigma \mu \nu} C^\sigma \bigr ) - 2 \bigl ( \partial_\mu \overline{C}^\rho \bigr ) \bigl ( \Gamma_{\sigma \rho \nu} C^\sigma \bigr ) \right ) \dif V_\eta \\
		& \phantom{\coloneq} + \frac{\varsigma}{2} \eta^{\mu \nu} \left ( \bigl ( \Gamma_{\sigma \mu \nu} \overline{C}^\sigma \bigr ) \bigl ( \partial_\rho C^\rho \bigr ) - 2 \bigl ( \Gamma_{\sigma \rho \nu} \overline{C}^\sigma \bigr ) \bigl ( \partial_\mu C^\rho \bigr ) \right ) \dif V_\eta \\
		& \phantom{\coloneq} + \frac{\varkappa^2 \zeta \varsigma \left ( 1 - \varsigma \right )}{2} \eta_{\mu \nu} \left ( \overline{C}^\rho \bigl ( \partial_\rho \overline{C}^\mu \bigr ) \right ) \left ( C^\sigma \bigl ( \partial_\sigma C^\nu \bigr ) \right ) \dif V_\eta
	\end{split}
	\end{equation}
	We call \(\varsigma\) the graviton-ghost parameter. In particular, this unifies the Faddeev--Popov construction, cf.\ \cite[Corollary 3.7]{Prinz_5}, with the symmetric construction of \propref{prop:symmetric-ghost-qgr}.\footnote{We remark that the relative minus sign of the ghost Lagrange density in comparison with \cite[Corollary 3.7]{Prinz_5} is due to a partial integration and emphasize that the used conventions are indeed the same.} Specifically, it is generated using the graviton homotopy gauge fixing fermion
	\begin{equation}
		\stigma \left ( \varsigma \right ) \coloneq \left ( - \frac{\zeta}{4 \varkappa} \eta^{\mu \nu} \overline{P} \bigl ( h_{\mu \nu} \bigr ) + \frac{\zeta \varsigma}{8} \overline{P} \bigl ( \overline{C}^\rho C_\rho \bigr ) + \left ( \frac{\varsigma}{4} - \frac{1}{2} \right ) \overline{C}^\rho B_\rho \right ) \dif V_\eta \label{eqn:qgr-homotopy-gff}
	\end{equation}
	via \(P \stigma \left ( \varsigma \right )\).
\end{thm}

\begin{proof}
	This can be shown analogously to \propref{prop:symmetric-ghost-qgr}.
\end{proof}

\enter

\begin{rem} \label{rem:qgr-gauge-fixing-boson}
	In addition to the linear gauge fixing boson of \eqnref{eqn:qgr-gauge-fixing-boson}, it is also possible to use a quadratic version
	\begin{align}
		F^{(2)} & \coloneq - \frac{\zeta}{4} \left ( \frac{1}{\varkappa} h^{\mu \nu} h_{\mu \nu} - \overline{C}^\rho C_\rho \right ) \dif V_\eta \label{eqn:qgr-gauge-fixing-boson-quadratic}
		\intertext{and the following infinite series}
		\begin{split} \label{eqn:qgr-gauge-fixing-boson-infinity}
			F^{(\infty)} & \coloneq - \frac{\zeta}{4} \left ( \frac{1}{\varkappa} g^{\mu \nu} h_{\mu \nu} - \overline{C}^\rho C_\rho \right ) \dif V_\eta \\
			& \phantom{:} \simeq \frac{\zeta}{4} \left ( \frac{1}{\varkappa^2} g^{\mu \nu} \eta_{\mu \nu} + \overline{C}^\rho C_\rho \right ) \dif V_\eta \, ,
		\end{split}
	\end{align}
	where we have used the identity \(g^{\mu \nu} h_{\mu \nu} \equiv \textfrac{g^{\mu \nu} \left ( g_{\mu \nu} - \eta_{\mu \nu} \right )}{\varkappa} \equiv \textfrac{\left ( d - g^{\mu \nu} \eta_{\mu \nu} \right )}{\varkappa}\) with \(d\) the dimension of spacetime, and then dropped the constant term, as it would not contribute on the level of the Lagrange density. However, the reason why we are using the linear variant of \eqnref{eqn:qgr-gauge-fixing-boson} is due to the fact that it reproduces the linearized de Donder gauge fixing condition. The gauge fixing bosons of \eqnsaref{eqn:qgr-gauge-fixing-boson-quadratic}{eqn:qgr-gauge-fixing-boson-infinity} correspond to different gauge fixing conditions and produce different ghost Lagrange densities (even on the propagator level). While we prefer the linearized variant of \propref{prop:symmetric-ghost-qgr} and \thmref{thm:symmetric-ghost-homotopy-qgr}, as it connects nicely to the corresponding Faddeev--Popov variant presented in \cite[Corollary 3.7]{Prinz_5} and used in \cite{Prinz_2,Prinz_4}, it might also be worthwhile to study the other variants in future work. Specifically, this differs from the situation in Quantum Yang--Mills theory, where the potential term is necessarily quadratic, cf.\ \eqnref{eqn:qym-gauge-fixing-boson}.
\end{rem}

\section{The case of covariant Quantum Yang--Mills theory} \label{sec:case-qym}

We calculate the symmetric gauge fixing and ghost Lagrange density for covariant Quantum Yang--Mills theory with a covariant Lorenz gauge fixing condition in \propref{prop:symmetric-ghost-qym}. Then we relate the symmetric setting to the Faddeev--Popov and opposed Faddeev--Popov constructions in \thmref{thm:symmetric-ghost-homotopy-qym}. Finally, we exemplify the construction and in particular the achieved curved-spacetime generalization with respect to \cite{Baulieu_Thierry-Mieg} in \exsaref{exmp:qym-su-two}{exmp:qym-schwarzschild}.

\enter

\begin{prop} \label{prop:symmetric-ghost-qym}
	The symmetric gauge fixing and ghost Lagrange density for Quantum Yang--Mills theory reads
	\begin{equation}
		\begin{split}
			\mathcal{L}_\textup{YM-GF-Sym-Ghost} & \coloneq \frac{1}{\xi} \left ( - \frac{1}{2 \mathrm{g}^2} \delta_{ab} \Lorenz^a \Lorenz^b + g^{\mu \nu} \left ( \partial_\mu \overline{c}_a \right ) \left ( \partial_\nu c^a \right ) \right ) \dif V_g \\
			& \phantom{\coloneq} + \frac{\mathrm{g}}{2} g^{\mu \nu} \tensor{f}{^a _b _c} \left ( \bigl ( \partial_\mu \overline{c}_a \bigr ) \bigl ( c^b A^c_\nu \bigr ) + \bigl ( \overline{c}_a A^b_\nu \bigr ) \bigl ( \partial_\mu c^c \bigr ) \right ) \dif V_g \\
			& \phantom{\coloneq} + \frac{\mathrm{g}^2 \xi}{16} f^{a b c} f_{ade} \overline{c}_b \overline{c}_c c^d c^e \dif V_g
		\end{split}
	\end{equation}
	with the covariant Lorenz gauge fixing functional \(\Lorenz^a \coloneq \mathrm{g} g^{\mu \nu} \bigl ( \nabla^{TM}_\mu A^a_\nu \bigr ) \equiv 0\). It can be obtained from the gauge fixing boson
	\begin{equation} \label{eqn:qym-gauge-fixing-boson}
		G \coloneq - \frac{\xi}{2} \left ( \delta_{a b} g^{\mu \nu} A^a_\mu A^b_\nu - \overline{c}_a c^a \right ) \dif V_g
	\end{equation}
	via \(\mathcal{Q} G\), where \(\mathcal{Q}\) is the gauge super-BRST operator.
\end{prop}

\begin{proof}
	The claimed statement results directly from the following calculations:\footnote{We emphasize the relation of the symmetric gauge fixing fermion \(\overline{Q} G\) to the Faddeev--Popov gauge fixing fermion \(\digamma \! \! _{\{ 1 \}}\), given in \cite[Equation (56)]{Prinz_5} using the same conventions:
	\begin{equation}
		\overline{Q} G \equiv \digamma \! \! _{\{ 1 \}} - \frac{\mathrm{g} \xi}{4} \tensor{f}{^a _b _c} \overline{c}_a \overline{c}^b c^c
	\end{equation}
	In addition, to achieve the symmetric setup, the Lautrup--Nakanishi auxiliary field needs to be shifted to become anti-Hermitian, cf.\ \eqnref{eqn:shifted_ln_field_ym}. This is implicitly included in the homotopy gauge fixing fermion of \eqnref{eqn:qym-homotopy-gff}.}
	\begin{subequations}
	\begin{align}
		\begin{split}
			\overline{Q} G & = \left ( - \delta_{a b} g^{\mu \nu} A^a_\mu \left ( \partial_\nu \overline{c}^b + \xi \mathrm{g} \tensor{f}{^b _c _d} \overline{c}^c A_\nu^d \right ) - \frac{\mathrm{g} \xi}{4} \tensor{f}{^a _b _c} \overline{c}^b \overline{c}^c c_a + \frac{1}{2} \overline{c}_a b^a \right ) \dif V_g \\
			& \simeq_\text{PI} \left ( g^{\mu \nu} \left ( \overline{c}_a \left ( \nabla^{TM}_\mu A^a_\nu \right ) + \xi \mathrm{g} \tensor{f}{^a _b _c} \overline{c}_a A^b_\mu A_\nu^c \right ) - \frac{\mathrm{g} \xi}{4} \tensor{f}{^a _b _c} \overline{c}^b \overline{c}^c c_a + \frac{1}{2} \overline{c}_a b^a \right ) \dif V_g \\
			& = \overline{c}_a \left ( \frac{1}{\mathrm{g}} \Lorenz^a + \underbrace{\xi \mathrm{g} g^{\mu \nu} \tensor{f}{^a _b _c} A^b_\mu A_\nu^c}_{= 0} - \frac{\mathrm{g} \xi}{4} \tensor{f}{^a _b _c} \overline{c}^b c^c + \frac{1}{2} b^a \right ) \dif V_g \, , \label{eqn:anti-q_g}
		\end{split}
		\intertext{where \(\simeq_\text{PI}\) denotes equality modulo partial integration, and}
		\begin{split}
			\bigl ( Q \circ \overline{Q} \bigr ) G & \simeq_\text{PI} \left ( \frac{1}{2 \xi} b_a b^a  + \frac{1}{\mathrm{g} \xi} b_a \Lorenz^a + \frac{1}{\xi} g^{\mu \nu} \bigl ( \partial_\mu \overline{c}_a \bigr ) \bigl ( \partial_\nu c^a \bigr ) \right ) \dif V_g \\
			& \phantom{\simeq} + \left ( \mathrm{g} \tensor{f}{^a _b _c} \bigl ( \partial_\mu \overline{c}_a \bigr ) c^b A^c_\nu - \frac{\mathrm{g}}{2} \tensor{f}{^a _b _c} b^b \overline{c}^c c_a + \frac{\mathrm{g}^2 \xi}{8} \tensor{f}{^a _b _c} \tensor{f}{_a _d _e} \overline{c}^b \overline{c}^c c^d c^e \right ) \dif V_g \, , \label{eqn:q_anti-q_g}
		\end{split}
		\intertext{together with the replacement of the Lautrup--Nakanishi auxiliary field with its anti-Hermitian shift}
		b^a & \equiv {b^\prime}^a + \frac{\mathrm{g} \xi}{2} \tensor{f}{^a _b _c} \overline{c}^b c^c \label{eqn:shifted_ln_field_ym}
		\intertext{and then finally eliminating the shifted auxiliary field \({b^\prime}^a\) by inserting its equation of motion}
		\operatorname{EoM} \bigl ( {b^\prime}^a \bigr ) & = \frac{1}{\mathrm{g}} \Lorenz^a \, ,
	\intertext{which are obtained as usual via an Euler--Lagrange variation of \eqnref{eqn:q_anti-q_g}, i.e.\ by solving}
	0 & \overset{!}{=} \left ( \left ( \frac{\partial}{\partial b^\prime_a} \right ) - \partial_\mu \left ( \frac{\partial}{\partial \, \bigl ( \partial_\mu b^\prime_a \bigr )} \right ) \right ) \mathcal{Q} G \, ,
	\end{align}
	\end{subequations}
	where the second term vanishes identically, as \(b^\prime_a\) is a Lagrange multiplier and thus has no kinetic term.
\end{proof}

\enter

\begin{thm} \label{thm:symmetric-ghost-homotopy-qym}
	We obtain the following homotopy in \(\vartheta \in [0, 1]\) between the Faddeev--Popov construction \(\vartheta = 0\), the symmetric setting \(\vartheta = \textfrac{1}{2}\) and the opposed Faddeev--Popov construction \(\vartheta = 1\):
	\begin{equation} \label{eqn:ym-gf-ghost-homotopy}
	\begin{split}
		\mathcal{L}_\textup{YM-GF-Ghost} \left ( \vartheta \right ) & \coloneq \frac{1}{\xi} \left ( - \frac{1}{2 \mathrm{g}^2} \delta_{ab} \Lorenz^a \Lorenz^b + g^{\mu \nu} \left ( \partial_\mu \overline{c}_a \right ) \left ( \partial_\nu c^a \right ) \right ) \dif V_g \phantom{\bigg \vert} \\
		& \phantom{\coloneq} + \frac{\mathrm{g}}{2} g^{\mu \nu} \tensor{f}{^a _b _c} \left ( \left ( 1 - \vartheta \right ) \bigl ( \partial_\mu \overline{c}_a \bigr ) \bigl ( c^b A^c_\nu \bigr ) + \vartheta \bigl ( \overline{c}_a A^b_\nu \bigr ) \bigl ( \partial_\mu c^c \bigr ) \right ) \dif V_g \phantom{\bigg \vert} \\
		& \phantom{\coloneq} + \frac{\mathrm{g}^2 \xi \vartheta \left ( 1 - \vartheta \right )}{4} f^{a b c} f_{ade} \overline{c}_b \overline{c}_c c^d c^e \dif V_g \phantom{\bigg \vert}
	\end{split}
	\end{equation}
	We call \(\vartheta\) the gauge ghost parameter. In particular, this unifies the Faddeev--Popov construction, cf.\ \cite[Corollary 4.6]{Prinz_5}, with the symmetric construction of \propref{prop:symmetric-ghost-qym}.\footnote{We remark that the relative minus sign of the ghost Lagrange density in comparison with \cite[Corollary 4.6]{Prinz_5} is due to a partial integration and emphasize that the used conventions are indeed the same.} Specifically, it is generated using the gauge boson homotopy gauge fixing fermion
	\begin{equation}
		\digamma \! \left ( \vartheta \right ) \coloneq \left ( - \frac{\xi}{2} \delta_{ab} g^{\mu \nu} \overline{Q} \bigl ( A^a_\mu A^b_\nu \bigr ) + \frac{\xi \vartheta}{4} \overline{Q} \left ( \overline{c}_a c^a \right ) + \left ( \frac{\vartheta}{2} - 1 \right ) \overline{c}_a b^a \right ) \dif V_g \label{eqn:qym-homotopy-gff}
	\end{equation}
	via \(Q \digamma \! \left ( \vartheta \right )\).
\end{thm}

\begin{proof}
	This can be shown analogously to \propref{prop:symmetric-ghost-qym}.
\end{proof}

\enter

\begin{exmp}[\(\operatorname{SU}(2)\) gauge group] \label{exmp:qym-su-two}
	To exemplify the covariant Yang--Mills theory ghost homotopy construction, we work out the specific situation of a \(\operatorname{SU}(2)\) gauge group here and that of a Schwarzschild background in \exref{exmp:qym-schwarzschild} --- both representing the respective simplest non-trivial applications: To this end, we remind the reader that we call the sum of the classical Yang--Mills theory Lagrange density \(\mathcal{L}_\text{YM}\) with the gauge fixing and ghost Lagrange density \(\mathcal{L}_\text{YM-GF-Ghost}\) the \emph{Quantum Yang--Mills theory Lagrange density}, i.e.\
	\begin{equation}
		\mathcal{L}_\text{QYM} \coloneq \mathcal{L}_\text{YM} + \mathcal{L}_\text{YM-GF-Ghost} \, ,
	\end{equation}
	because this Lagrange density allows for a perturbative quantization.\footnote{The gauge fixing Lagrange density is needed in order to calculate the gauge boson propagator and the ghost Lagrange density is needed to obtain a transversal perturbative expansion.} We start by expanding the classical Yang--Mills theory Lagrange density, as stated in \eqnref{eqn:ym_lagrange_density_introduction}, using \(F^a_{\mu \nu} \coloneq \mathrm{g} \bigl ( \partial_\mu A^a_\nu - \partial_\nu A^a_\mu \bigr ) - \mathrm{g}^2 \tensor{f}{^a _b _c} A^b_\mu A^c_\nu\), as follows:
	\begin{equation} \label{eqn:ym_lagrange_density_expanded}
	\begin{split}
		\mathcal{L}_\text{YM} & \coloneq - \frac{1}{4 \mathrm{g}^2} \delta_{a b} g^{\mu \nu} g^{\rho \sigma} F^a_{\mu \rho} F^b_{\nu \sigma} \dif V_g \\
		& \phantom{:} \equiv - \frac{1}{2} \delta_{a b} g^{\mu \nu} g^{\rho \sigma} \left ( \bigl ( \partial_\mu A^a_\rho \bigr ) \bigl ( \partial_\nu A^b_\sigma - \partial_\sigma A^b_\nu \bigr ) \right ) \dif V_g \\
		& \phantom{\coloneq} + \frac{\mathrm{g}}{2} \tensor{f}{_a _b _c} g^{\mu \nu} g^{\rho \sigma} \left ( \bigl ( \partial_\mu A^a_\rho \bigr ) A^b_\nu A^c_\sigma \right ) \dif V_g \\
		& \phantom{\coloneq} - \frac{\mathrm{g}^2}{4} \tensor{f}{^a _b _c} \tensor{f}{_a _d _e} g^{\mu \nu} g^{\rho \sigma} \left ( A^b_\mu A^c_\rho A^d_\nu A^e_\sigma \right ) \dif V_g
	\end{split}
	\end{equation}
	For the first example we consider a \(\operatorname{SU}(2)\) gauge group and a general spacetime manifold.\footnote{We refer to \defnssaref{def:spacetime}{defn:spacetime-matter_bundle}{defn:sheaf-of-particle-fields} for the mathematically precise definitions and to \cite[Section 2]{Prinz_5} for a geometrically more rigorous setup, using differential-graded supergeometry. \label{ftn:yang-mills-setup}} To this end, we recall that the Lie algebra is given as the three-dimensional real vector space \(\mathfrak{su}(2) \coloneq \left \langle \imaginary T^1, \imaginary T^2, \imaginary T^3 \right \rangle_\mathbb{R}\). The generators are given via \(T^a \coloneq \textfrac{\sigma^a}{2}\), where \(\sigma^a\) denotes the Pauli matrices for \(a = \set{1, 2, 3}\). Additionally, this vector space is turned into a Lie algebra by using the Levi-Civita symbol \(\tensor{\varepsilon}{^a _b _c}\) as structure constants, i.e.\ \(\tensor{f}{^a _b _c} \coloneq \tensor{\varepsilon}{^a _b _c}\) with
	\begin{subequations}
	\begin{align}
		\tensor{\varepsilon}{^1 _2 _3} & = - \tensor{\varepsilon}{^1 _3 _2} = \tensor{\varepsilon}{^2 _3 _1} = - \tensor{\varepsilon}{^2 _1 _3} = \tensor{\varepsilon}{^3 _1 _2} = - \tensor{\varepsilon}{^3 _2 _1} \coloneq 1 \, ,
		\intertext{or equivalently}
		\commutatorbig{T^2}{T^3} & = \imaginary T^1 \, , \quad \commutatorbig{T^3}{T^1} = \imaginary T^2 \quad \text{and} \quad \commutatorbig{T^1}{T^2} = \imaginary T^3 \, .
	\end{align}
	\end{subequations}
	This implies the following field content:
	\begin{equation}
		\mathcal{F}_\text{\(\operatorname{SU}(2)\)-YM} \coloneq \setbig{A_\mu^1, A_\mu^2, A_\mu^3, c^1, c^2, c^3, \overline{c}_1, \overline{c}_2, \overline{c}_3}
	\end{equation}
	With this, the expanded Yang--Mills theory Lagrange density of \eqnref{eqn:ym_lagrange_density_expanded} reads:
	\begin{equation} \label{eqn:ym-lagrange-density-expanded-su2}
	\begin{split}
		\mathcal{L}_\text{\(\operatorname{SU}(2)\)-YM} & \coloneq - \frac{1}{4 \mathrm{g}^2} g^{\mu \nu} g^{\rho \sigma} \Bigl ( F^1_{\mu \rho} F^1_{\nu \sigma} + F^2_{\mu \rho} F^2_{\nu \sigma} + F^3_{\mu \rho} F^3_{\nu \sigma} \Bigr ) \dif V_g \\
		& \phantom{:} \equiv - \frac{1}{2} g^{\mu \nu} g^{\rho \sigma} \Bigl ( \bigl ( \partial_\mu A^1_\rho \bigr ) \bigl ( \partial_\nu A^1_\sigma - \partial_\sigma A^1_\nu \bigr ) + \bigl ( \partial_\mu A^2_\rho \bigr ) \bigl ( \partial_\nu A^2_\sigma - \partial_\sigma A^2_\nu \bigr ) \\
		& \phantom{\coloneq - \frac{1}{2} g^{\mu \nu} g^{\rho \sigma} \Bigl (} + \bigl ( \partial_\mu A^3_\rho \bigr ) \bigl ( \partial_\nu A^3_\sigma - \partial_\sigma A^3_\nu \bigr ) \Bigr ) \dif V_g \\
		& \phantom{\coloneq} + \frac{\mathrm{g}}{2} g^{\mu \nu} g^{\rho \sigma} \Bigl ( \bigl ( \partial_\mu A^1_\rho \bigr ) A^2_\nu A^3_\sigma + \bigl ( \partial_\mu A^2_\rho \bigr ) A^3_\nu A^1_\sigma + \bigl ( \partial_\mu A^3_\rho \bigr ) A^1_\nu A^2_\sigma \\
		& \phantom{\coloneq + \frac{\mathrm{g}}{2} g^{\mu \nu} g^{\rho \sigma} \Bigl (} - \bigl ( \partial_\mu A^1_\rho \bigr ) A^3_\nu A^2_\sigma - \bigl ( \partial_\mu A^3_\rho \bigr ) A^2_\nu A^1_\sigma - \bigl ( \partial_\mu A^2_\rho \bigr ) A^1_\nu A^3_\sigma \Bigr ) \dif V_g \\
		& \phantom{\coloneq} - \mathrm{g}^2 g^{\mu \nu} g^{\rho \sigma} \left ( A^1_\mu A^2_\rho A^1_\nu A^2_\sigma + A^2_\mu A^3_\rho A^2_\nu A^3_\sigma + A^3_\mu A^1_\rho A^3_\nu A^1_\sigma \right ) \dif V_g
	\end{split}
	\end{equation}
	Then, the ghost homotopy Lagrange density from \eqnref{eqn:ym-gf-ghost-homotopy} specializes to the following:
	\begin{equation}
	\begin{split}
		\mathcal{L}_\text{\(\operatorname{SU}(2)\)-YM-GF-Ghost} \left ( \vartheta \right ) & \coloneq - \frac{1}{2 \mathrm{g}^2 \xi} g^{\mu \nu} \Bigl ( \bigl ( \nabla^{TM}_\mu A^1 \bigr ) \bigl ( \nabla^{TM}_\nu A^1 \bigr ) + \bigl ( \nabla^{TM}_\mu A^2 \bigr ) \bigl ( \nabla^{TM}_\nu A^2 \bigr ) \\
		& \phantom{\coloneq - \frac{1}{2 \mathrm{g}^2 \xi} g^{\mu \nu} \Bigl (} + \bigl ( \nabla^{TM}_\mu A^3 \bigr ) \bigl ( \nabla^{TM}_\nu A^3 \bigr ) \Bigr ) \dif V_g \\
		& \phantom{\coloneq} + \frac{1}{\xi} g^{\mu \nu} \left ( \bigl ( \partial_\mu \overline{c}_1 \bigr ) \bigl ( \partial_\nu c^1 \bigr ) + \bigl ( \partial_\mu \overline{c}_2 \bigr ) \bigl ( \partial_\nu c^2 \bigr ) + \bigl ( \partial_\mu \overline{c}_3 \bigr ) \bigl ( \partial_\nu c^3 \bigr ) \right ) \dif V_g \phantom{\bigg \vert} \\
		& \phantom{\coloneq} + \frac{\mathrm{g}}{2} g^{\mu \nu} \left ( 1 - \vartheta \right ) \left ( \bigl ( \partial_\mu \overline{c}_1 \bigr ) \bigl ( c^2 A^3_\nu \bigr ) + \bigl ( \partial_\mu \overline{c}_2 \bigr ) \bigl ( c^3 A^1_\nu \bigr ) + \bigl ( \partial_\mu \overline{c}_3 \bigr ) \bigl ( c^1 A^2_\nu \bigr ) \right ) \dif V_g \\
		& \phantom{\coloneq} - \frac{\mathrm{g}}{2} g^{\mu \nu} \left ( 1 - \vartheta \right ) \left (  \bigl ( \partial_\mu \overline{c}_1 \bigr ) \bigl ( c^3 A^2_\nu \bigr ) - \bigl ( \partial_\mu \overline{c}_3 \bigr ) \bigl ( c^2 A^1_\nu \bigr ) - \bigl ( \partial_\mu \overline{c}_2 \bigr ) \bigl ( c^1 A^3_\nu \bigr ) \right ) \dif V_g \\
		& \phantom{\coloneq} + \frac{\mathrm{g}}{2} g^{\mu \nu} \vartheta \left ( \bigl ( \overline{c}_1 A^2_\nu \bigr ) \bigl ( \partial_\mu c^3 \bigr ) + \bigl ( \overline{c}_2 A^3_\nu \bigr ) \bigl ( \partial_\mu c^1 \bigr ) + \bigl ( \overline{c}_3 A^1_\nu \bigr ) \bigl ( \partial_\mu c^2 \bigr ) \right ) \dif V_g \phantom{\bigg \vert} \\
		& \phantom{\coloneq} - \frac{\mathrm{g}}{2} g^{\mu \nu} \vartheta \left ( \bigl ( \overline{c}_1 A^3_\nu \bigr ) \bigl ( \partial_\mu c^2 \bigr ) + \bigl ( \overline{c}_3 A^2_\nu \bigr ) \bigl ( \partial_\mu c^1 \bigr ) + \bigl ( \overline{c}_2 A^1_\nu \bigr ) \bigl ( \partial_\mu c^3 \bigr ) \right ) \dif V_g \phantom{\bigg \vert} \\
		& \phantom{\coloneq} + \mathrm{g}^2 \xi \vartheta \left ( 1 - \vartheta \right ) \left ( \overline{c}_1 \overline{c}_2 c^1 c^2 + \overline{c}_2 \overline{c}_3 c^2 c^3 + \overline{c}_3 \overline{c}_1 c^3 c^1 \right ) \dif V_g
	\end{split}
	\end{equation}
	We emphasize that the present example shows a part of the electroweak sector of the Standard Model. Explicitly, the whole electroweak sector additionally contains the \(\operatorname{U}(1)\) gauge group for electromagnetism and the Higgs sector with its spontaneous symmetry breaking, cf.\ e.g.\ \cite{Prinz_4,Romao_Silva}.
\end{exmp}

\enter

\begin{exmp}[Schwarzschild spacetime] \label{exmp:qym-schwarzschild}
	Given the situation of \exref{exmp:qym-su-two}, for the second example we consider a Schwarzschild background and a general compact semisimple gauge group.\footnote{We refer to \ftnref{ftn:yang-mills-setup} and the references therein for the specific setup.} To this end, we use spherical coordinates \(x^\mu \equiv (t, r, \phi, \theta)\) and the \((+,-,-,-)\) sign convention for the metric. Additionally, \(G\) is Newton's constant, \(M\) the mass sitting in the coordinate origin and \(r_S \coloneq 2 G M\) the Schwarzschild radius.\footnote{We emphasize that we use units with \(c = \hbar = 1\).} Then, the Schwarzschild metric \(g^S_{\mu \nu}\) is given by
	\begin{subequations}
	\begin{align}
		g^S_{\mu \nu} \dif x^\mu \dif x^\nu & \coloneq \mathfrak{S} \dif t^2 - \mathfrak{S}^{-1} \dif r^2 - r^2 \dif \Omega^2
		\intertext{with the Schwarzschild factor}
		\mathfrak{S} & \coloneq \left ( 1 - \frac{r_S}{r} \right )
		\intertext{and the sphere volume form}
		\dif \Omega^2 & \coloneq \dif \theta^2 + \sin^2 \left ( \theta \right ) \dif \phi^2 \, .
	\end{align}
	\end{subequations}
	In addition, the corresponding non-zero components of the Christoffel symbol for the respective Levi-Civita connection are given via:
	\begin{subequations}
	\begin{align}
		& \tensor{\Gamma}{^t _t _r} \equiv \tensor{\Gamma}{^t _r _t} \coloneq - \frac{r_S}{2 r^2} \mathfrak{S}^{-1} \\
		& \tensor{\Gamma}{^r _t _t} \coloneq - \frac{r_S}{2 r^2} \mathfrak{S} && \tensor{\Gamma}{^r _r _r} \coloneq \frac{r_S}{2 r^2} \mathfrak{S}^{-1} \qquad \quad \qquad \tensor{\Gamma}{^r _\theta _\theta} \coloneq r \mathfrak{S} \qquad \tensor{\Gamma}{^r _\phi _\phi} \coloneq r \mathfrak{S} \sin^2 \left ( \theta \right ) \\
		& \tensor{\Gamma}{^\theta _r _\theta} \equiv \tensor{\Gamma}{^\theta _\theta _r} \coloneq - \frac{1}{r} && \tensor{\Gamma}{^\theta _\phi _\phi} \coloneq \frac{1}{2} \sin \left ( 2 \theta \right ) \\
		& \tensor{\Gamma}{^\phi _r _\phi} \equiv \tensor{\Gamma}{^\phi _\phi _r} \coloneq - \frac{1}{r} && \tensor{\Gamma}{^\phi _\theta _\phi} \equiv \tensor{\Gamma}{^\phi _\phi _\theta} \coloneq - \cot \left ( \theta \right )
	\end{align}
	\end{subequations}
	Finally, the Schwarzschild volume density reads
	\begin{subequations}
	\begin{align}
		\sqrt{- \dt{g_S}} & \coloneq r^2 \sin \left ( \theta \right ) \, ,
		\intertext{such that the corresponding Riemannian volume form is given by}
		\dif V_{g_S} & \coloneq r^2 \sin \left ( \theta \right ) \dif t \wedge \dif r \wedge \dif \theta \wedge \dif \phi \, .
	\end{align}
	\end{subequations}
	Inserting this into \eqnref{eqn:ym_lagrange_density_expanded} and writing \(\dif V_S \coloneq \dif t \wedge \dif r \wedge \dif \theta \wedge \dif \phi\), we obtain:
	\begin{equation} \label{eqn:ym_lagrange_density_schwarzschild}
	\begin{split}
		\mathcal{L}_\text{YM-S} & \coloneq - \frac{1}{4 \mathrm{g}^2} \delta_{a b} g_S^{\mu \nu} g_S^{\rho \sigma} F^a_{\mu \rho} F^b_{\nu \sigma} \dif V_{g_S} \\
		& \phantom{:} \equiv \frac{\sin \left ( \theta \right )}{4 \mathrm{g}^2} \delta_{a b} \Biggl ( r^2 F^a_{t r} F^b_{t r} + \mathfrak{S}^{-1} F^a_{t \theta} F^b_{t \theta} + \frac{1}{\sin^2 \left ( \theta \right ) \mathfrak{S}} F^a_{t \phi} F^b_{t \phi} \\
		& \phantom{\coloneq \frac{\sin \left ( \theta \right )}{4 \mathrm{g}^2} \delta_{a b} \Biggl (} - \mathfrak{S} F^a_{r \theta} F^b_{r \theta} - \frac{\mathfrak{S}}{\sin^2 \left ( \theta \right )} F^a_{r \phi} F^b_{r \phi} - \frac{1}{r^2 \sin^2 \left ( \theta \right )} F^a_{\theta \phi} F^b_{\theta \phi} \Biggr ) \dif V_S
	\end{split}
	\end{equation}
	Finally, the ghost homotopy Lagrange density from \eqnref{eqn:ym-gf-ghost-homotopy} specializes to the following, where \(\Lorenz^a \coloneq \mathrm{g} g^{\mu \nu} \bigl ( \nabla^{TM}_\mu A^a_\nu \bigr ) \equiv 0\) is the covariant Lorenz gauge fixing functional and \(2_\delta\) denotes the square of a Lie algebra vector \(Z^a\) with respect to the Killing form \(\delta_{ab}\), i.e.\ \(\left ( Z^a \right )^{2_\delta} \coloneq \delta_{ab} Z^a Z^b\):\footnote{We emphasize that the covariant Lorenz gauge fixing condition prevents the coupling of graviton-ghosts to gauge bosons, cf.\ \cite[Theorem 5.4]{Prinz_5}, and also constitutes an optimal choice in the sense that it only operates on gauge degrees of freedom, cf.\ \cite[Definition 3.2 and the following results]{Prinz_7}.}
	\begin{equation}
	\begin{split}
		\mathcal{L}_\text{YM-GF-Ghost-S} \left ( \vartheta \right ) & \coloneq \frac{1}{\xi} \left ( - \frac{1}{2 \mathrm{g}^2} \left ( \Lorenz_S \right )^{2_\delta} + g_S^{\mu \nu} \left ( \partial_\mu \overline{c}_a \right ) \left ( \partial_\nu c^a \right ) \right ) \dif V_{g_S} \phantom{\bigg \vert} \\
		& \phantom{\coloneq} + \frac{\mathrm{g}}{2} g_S^{\mu \nu} \tensor{f}{^a _b _c} \left ( \left ( 1 - \vartheta \right ) \bigl ( \partial_\mu \overline{c}_a \bigr ) \bigl ( c^b A^c_\nu \bigr ) + \vartheta \bigl ( \overline{c}_a A^b_\nu \bigr ) \bigl ( \partial_\mu c^c \bigr ) \right ) \dif V_{g_S} \phantom{\bigg \vert} \\
		& \phantom{\coloneq} + \frac{\mathrm{g}^2 \xi \vartheta \left ( 1 - \vartheta \right )}{4} f^{a b c} f_{ade} \overline{c}_b \overline{c}_c c^d c^e \dif V_{g_S} \phantom{\bigg \vert} \\
		& \phantom{:} \equiv - \frac{\sin \left ( \theta \right )}{2 \xi} \Biggl ( \frac{r^2}{\mathfrak{S}} \partial_t A^a_t - \left ( r^2 \mathfrak{S} \partial_r + 2 r \mathfrak{S} + r_S \right ) A^a_r \\
		& \phantom{\coloneq - \frac{\sin \left ( \theta \right )}{2 \xi} \Biggl (} - \Bigl ( \partial_\theta + \cot \left ( \theta \right ) \Bigr ) A^a_\theta - \frac{1}{\sin^2 \left ( \theta \right )} \partial_\phi A^a_\phi \Biggr )^{2_\delta} \dif V_S \\
		& \phantom{\coloneq} + \frac{\sin \left ( \theta \right )}{\xi} \Biggl ( \frac{r^2}{\mathfrak{S}} \left ( \partial_t \overline{c}_a \right ) \left ( \partial_t c^a \right ) - r^2 \mathfrak{S} \left ( \partial_r \overline{c}_a \right ) \left ( \partial_r c^a \right ) \\
		& \phantom{\coloneq + \frac{\sin \left ( \theta \right )}{\xi} \Biggl (} - \left ( \partial_\theta \overline{c}_a \right ) \left ( \partial_\theta c^a \right ) - \frac{1}{\sin^2 \left ( \theta \right )} \left ( \partial_\phi \overline{c}_a \right ) \left ( \partial_\phi c^a \right ) \Biggr ) \dif V_S \phantom{\bigg \vert} \\
		& \phantom{\coloneq} + \frac{\sin \left ( \theta \right ) \mathrm{g}}{2} \tensor{f}{^a _b _c} \Biggl ( \frac{r^2 \left ( 1 - \vartheta \right )}{\mathfrak{S}} \bigl ( \partial_t \overline{c}_a \bigr ) \bigl ( c^b A^c_t \bigr ) + \frac{r^2 \vartheta}{\mathfrak{S}} \bigl ( \overline{c}_a A^b_t \bigr ) \bigl ( \partial_t c^c \bigr ) \\
		& \phantom{\coloneq + \frac{\sin \left ( \theta \right ) \mathrm{g}}{2} \Biggl (} - r^2 \mathfrak{S} \left ( 1 - \vartheta \right ) \bigl ( \partial_r \overline{c}_a \bigr ) \bigl ( c^b A^c_r \bigr ) - r^2 \mathfrak{S} \vartheta \bigl ( \overline{c}_a A^b_r \bigr ) \bigl ( \partial_r c^c \bigr ) \\
		& \phantom{\coloneq + \frac{\sin \left ( \theta \right ) \mathrm{g}}{2} \Biggl (} - \frac{r^2 \left ( 1 - \vartheta \right )}{r^2} \bigl ( \partial_\theta \overline{c}_a \bigr ) \bigl ( c^b A^c_\theta \bigr ) - \frac{r^2 \vartheta}{r^2} \bigl ( \overline{c}_a A^b_\theta \bigr ) \bigl ( \partial_\theta c^c \bigr ) \\
		& \phantom{\coloneq + \frac{\sin \left ( \theta \right ) \mathrm{g}}{2} \Biggl (} - \frac{r^2 \left ( 1 - \vartheta \right )}{r^2 \sin^2 \left ( \theta \right )} \bigl ( \partial_\phi \overline{c}_a \bigr ) \bigl ( c^b A^c_\phi \bigr ) - \frac{r^2 \vartheta}{r^2 \sin^2 \left ( \theta \right )} \bigl ( \overline{c}_a A^b_\phi \bigr ) \bigl ( \partial_\phi c^c \bigr ) \Biggr ) \dif V_S \phantom{\bigg \vert} \\
		& \phantom{\coloneq} + \frac{r^2 \sin \left ( \theta \right ) \mathrm{g}^2 \xi \vartheta \left ( 1 - \vartheta \right )}{4} f^{a b c} f_{ade} \overline{c}_b \overline{c}_c c^d c^e \dif V_S \phantom{\bigg \vert}
	\end{split}
	\end{equation}
	We emphasize that the present example describes the situation of a stationary black hole in the origin of the coordinate system. Thus, it showcases the curved-spacetime generalization achieved in this section compared to the original flat-spacetime results of \cite{Baulieu_Thierry-Mieg}.
\end{exmp}

\section{The total construction} \label{sec:total-construction}

Combining the results from \sectionsaref{sec:case-qgr}{sec:case-qym}, we show in \thmref{thm:total_gf_ghost_lagrange_density} that the complete symmetric gauge fixing and ghost Lagrange density for the coupling of (effective) Quantum General Relativity to Quantum Yang--Mills theory can be generated via a \emph{total gauge fixing boson} using the \emph{total super-BRST operator}. Next, we observe in \colref{col:total-setup-double-homotopy} that both homotopies in the ghost construction can be added to obtain a double homotopy for the complete gauge fixing and ghost Lagrange density.

\enter

\begin{thm} \label{thm:total_gf_ghost_lagrange_density}
	We obtain the complete gauge fixing and ghost Lagrange density for (effective) Quantum General Relativity coupled to Quantum Yang--Mills theory via
	\begin{equation}
		\mathcal{L}_\textup{GR-GF-Sym-Ghost} + \mathcal{L}_\textup{YM-GF-Sym-Ghost} \equiv \mathcal{D} Y \, ,
	\end{equation}
	where \(\mathcal{D}\) is the total super-BRST operator and \(Y \coloneq F + G\) is the total gauge fixing boson.
\end{thm}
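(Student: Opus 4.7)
The plan is to expand $\mathcal{D}B = (P + Q) \circ (\overline{P} + \overline{Q})(F + G)$ by distributivity, yielding eight contributions which I would group as
\begin{equation*}
	\mathcal{D}B = \bigl( P\overline{P}F + Q\overline{Q}G \bigr) + \bigl( P\overline{Q}F + Q\overline{P}F + Q\overline{Q}F \bigr) + \bigl( P\overline{P}G + P\overline{Q}G + Q\overline{P}G \bigr).
\end{equation*}
The diagonal pieces $P\overline{P}F = \mathcal{P}F$ and $Q\overline{Q}G = \mathcal{Q}G$ yield, by \propsaref{prop:symmetric-ghost-qgr}{prop:symmetric-ghost-qym} after performing the shifts \eqnsaref{eqn:shifted_LN_field_GR}{eqn:shifted_LN_field_YM} and eliminating the Lautrup--Nakanishi fields via their equations of motion, exactly $\mathcal{L}_\textup{GR-GF-Ghost}$ and $\mathcal{L}_\textup{YM-GF-Ghost}$. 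Since the graviton and gauge auxiliary fields are independent and the two shifts act on disjoint sectors, they can be performed simultaneously without interference.

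For the second group, my first observation is that $F$ contains only gauge singlets, namely $\eta^{\mu\nu}$, $h_{\mu\nu}$, $\overline{C}^\rho$ and $C_\rho$, on which the gauge BRST operators $Q$ and $\overline{Q}$ act trivially by Definition~\ref{defn:gauge_brst_operator}. Hence $QF = \overline{Q}F = 0$, and combined with the characteristic identities $\commutatorbig{Q}{\overline{P}} = \commutatorbig{Q}{\overline{Q}} = 0$ recalled in Definition~\ref{defn:total_brst_operator}, this immediately yields $P\overline{Q}F = 0$, $Q\overline{P}F = -\overline{P}QF = 0$ and $Q\overline{Q}F = -\overline{Q}QF = 0$, so all three $F$-cross terms vanish identically.

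The heart of the proof -- and the main obstacle -- is the third group $P\overline{P}G + P\overline{Q}G + Q\overline{P}G$. My strategy is to exploit that $G$ is a diffeomorphism scalar density: its integrand $\delta_{ab}g^{\mu\nu}A^a_\mu A^b_\nu - \overline{c}_a c^a$ is a coordinate scalar, $\dif V_g$ is the canonical Riemannian volume form, and this covariance is preserved by $\overline{Q}$ (which touches neither $g_{\mu\nu}$ nor the diffeomorphism transformation law of the gauge sector). Acting by $P$ or $\overline{P}$ on such a diffeomorphism-scalar top form reduces, via the explicit formulas in Definition~\ref{defn:diffeomorphism_brst_operator}, to a Lie derivative $\varkappa \Lie_C$ (resp.\ $\varkappa \Lie_{\overline{C}}$); by Cartan's magic formula applied to a top form $\omega$ this equals $\varkappa \dif(\iota_C \omega)$, an exact form. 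The iterated term $P\overline{P}G = -\overline{P}PG$ is handled analogously, since $\overline{P}$ commutes with $\dif$ and applied to diffeomorphism-covariant data again yields an exact form. Because the Lagrange density is defined only modulo exact forms -- total derivatives neither contribute to the action nor alter the Euler--Lagrange equations used in the auxiliary-field elimination -- these three cross terms drop, leaving $\mathcal{D}B \equiv \mathcal{L}_\textup{GR-GF-Ghost} + \mathcal{L}_\textup{YM-GF-Ghost}$. The delicate aspect I expect to require the most care is the bookkeeping of the various $\varkappa$- and $\zeta$-prefactors appearing in $P$, $\overline{P}$ (which differ between $h_{\mu\nu}$, $C^\rho$, and generic tensor fields $\varphi$) to confirm that they combine into a clean Lie derivative action on $G$, and that the boundary contributions produced by Cartan's formula remain inert with respect to the shift and elimination procedure performed in the retained diagonal sectors.
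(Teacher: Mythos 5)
Your proof is correct and takes essentially the same route as the paper: the paper's own (very terse) proof likewise reduces the claim to the two diagonal terms \(\mathcal{P} F\) and \(\mathcal{Q} G\), handled by \propsaref{prop:symmetric-ghost-qgr}{prop:symmetric-ghost-qym}, plus the vanishing of the cross terms via \(Q F = \overline{Q} F = 0\) and \(P G \simeq_\text{TD} \overline{P} G \simeq_\text{TD} 0\). The only difference is presentational: where you justify the total-derivative claims for \(G\) by the Lie-derivative/Cartan-formula argument for a diffeomorphism-covariant top form, the paper simply cites Lemma 3.3 of the earlier article \cite{Prinz_5}, which encapsulates exactly that argument.
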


\begin{proof}
	This follows immediately from \propscref{prop:symmetric-ghost-qgr}{prop:symmetric-ghost-qym} and the following equalities:
	\begin{subequations}
	\begin{align}
		P G & \simeq_\text{TD} 0 \, , \\
		\overline{P} G & \simeq_\text{TD} 0 \, , \\
		Q F & = 0
		\intertext{and}
		\overline{Q} F & = 0 \, ,
	\end{align}
	\end{subequations}
	where \(\simeq_\text{TD}\) means equality modulo total derivatives, which hold due to \cite[Lemma 3.5]{Prinz_5}.
\end{proof}

\enter

\begin{col} \label{col:total-setup-double-homotopy}
	Given the situation of \thmref{thm:symmetric-ghost-homotopy-qgr} combined with \thmref{thm:symmetric-ghost-homotopy-qym}, we obtain the following double-homotopy in \((\varsigma, \vartheta) \in [0, 1]^2\) between the corresponding Faddeev--Popov constructions, the symmetric settings and the opposed Faddeev--Popov constructions: We apply the total BRST operator \(D \coloneq P + Q\) to the total homotopy gauge fixing fermion \(\chi \left ( \varsigma, \vartheta \right ) \coloneq \stigma \left ( \varsigma \right ) + \digamma \! \left ( \vartheta \right )\):
	\begin{equation}
		\mathcal{L}_\textup{GR-YM-GF-Ghost} \left ( \varsigma, \vartheta \right ) \coloneq D \chi \left ( \varsigma, \vartheta \right )
	\end{equation}
\end{col}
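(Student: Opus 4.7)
The plan is to reduce the double-homotopy identity to the two single-variable homotopies from \thmref{thm:symmetric-ghost-homotopy-qgr} and \thmref{thm:symmetric-ghost-homotopy-qym}, together with the vanishing of the cross-sector terms modulo total derivatives established in the course of \thmref{thm:total_gf_ghost_lagrange_density}. Concretely, I would first expand by bilinearity
\[
    D \chi \left ( \lambda, \tau \right ) \; = \; P \stigma \left ( \lambda \right ) + Q \digamma \! \left ( \tau \right ) + Q \stigma \left ( \lambda \right ) + P \digamma \! \left ( \tau \right ) \, .
\]
The first two summands are identified with \(\mathcal{L}_\textup{GR-GF-Ghost} \left ( \lambda \right )\) and \(\mathcal{L}_\textup{YM-GF-Ghost} \left ( \tau \right )\), respectively, by direct application of the two homotopy theorems.

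Next, I would argue that the cross-sector terms vanish modulo total derivatives, just as in \thmref{thm:total_gf_ghost_lagrange_density}. The fermion \(\stigma \left ( \lambda \right )\) is built exclusively from graviton-sector constituents (\(h_{\mu \nu}\), \(C_\rho\), \(\overline{C}_\rho\), \(B^\rho\) and the flat background \(\eta_{\mu \nu}\)). None of these carry YM color, so every action of \(Q\) on such a field is at worst a Lie derivative \(\lie_c \varphi\) of a gauge singlet, which integrates against \(\dif V_\eta\) to a total divergence by \cite[Lemma 3.3]{Prinz_5}. Symmetrically, \(\digamma \! \left ( \tau \right )\) is built from YM-sector constituents \(A^a_\mu\), \(c^a\), \(\overline{c}_a\), \(b^a\) together with \(g^{\mu \nu}\) and \(\dif V_g\), all of which are diffeomorphism covariant; applying \(P \varphi = \varkappa \Lie_C \varphi\) and integrating yields \(P \digamma \! \left ( \tau \right ) \simeq_\text{TD} 0\) by the same lemma.

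Combining these observations gives
\[
    D \chi \left ( \lambda, \tau \right ) \; \simeq_\text{TD} \; \mathcal{L}_\textup{GR-GF-Ghost} \left ( \lambda \right ) + \mathcal{L}_\textup{YM-GF-Ghost} \left ( \tau \right ) \, ,
\]
and evaluating at the corners and midpoint of \([0, 1]^2\) reproduces, in each sector independently, the Faddeev--Popov, symmetric and opposed Faddeev--Popov constructions, delivering the asserted double-homotopy.

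The main technical point I would watch is the presence of \(\overline{P}\) already inside \(\stigma \left ( \lambda \right )\): the term \(Q \stigma \left ( \lambda \right )\) then contains compositions \(Q \circ \overline{P}\) acting on graviton-sector expressions, and analogously \(P \digamma \! \left ( \tau \right )\) contains \(P \circ \overline{Q}\). These are harmless because \(\commutatorbig{Q}{\overline{P}} \equiv 0\) and \(\commutatorbig{P}{\overline{Q}} \equiv 0\) follow from the supercommutator identities of \defnref{defn:total_brst_operator} combined with the disjoint action of each (anti-)BRST pair on the other sector's matter content; commuting \(Q\) (resp.\ \(P\)) past \(\overline{P}\) (resp.\ \(\overline{Q}\)) lets it land directly on a singlet, where the divergence argument above applies. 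A secondary care point is the explicit Lautrup--Nakanishi pieces \(\overline{C}^\rho B_\rho\) and \(\overline{c}_a b^a\) in the homotopy fermions: here the cross-sector operator acts via the Lie derivative on \(B^\rho\) or \(b^a\) and one must check that the resulting expression genuinely reassembles as a divergence of the appropriate volume form, which is again the content of \cite[Lemma 3.3]{Prinz_5}.
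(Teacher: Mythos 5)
Your proposal is correct and follows essentially the same route as the paper: the paper's proof likewise reduces the claim to Theoremata \ref{thm:symmetric-ghost-homotopy-qgr} and \ref{thm:symmetric-ghost-homotopy-qym} for the diagonal terms \(P \stigma(\lambda)\) and \(Q \digamma(\tau)\), and disposes of the cross-sector terms by the same argument as in the proof of \thmref{thm:total_gf_ghost_lagrange_density} (vanishing exactly for \(Q\) on the graviton-sector fermion and modulo total derivatives, via \cite[Lemma 3.3]{Prinz_5}, for \(P\) on the Yang--Mills-sector fermion). Your additional remarks on the compositions \(Q \circ \overline{P}\) and \(P \circ \overline{Q}\) only make explicit what the paper leaves implicit, so there is no gap.
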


\begin{proof}
	This follows directly from \thmsaref{thm:symmetric-ghost-homotopy-qgr}{thm:symmetric-ghost-homotopy-qym} together with the argument from the proof of \thmref{thm:total_gf_ghost_lagrange_density}.
\end{proof}

\section{Conclusion} \label{sec:conclusion}

We have studied symmetric gauge fixing and ghost Lagrange densities for (effective) Quantum General Relativity coupled to Quantum Yang--Mills theory. To this end, we recalled in \sectionref{sec:general-setup} important notions of the diffeomorphism-gauge BRST double complex, introduced in \cite{Prinz_5}, together with an extensive discussion on the ghost conjugation and the shifted anti-Hermitian Lautrup--Nakanishi auxiliary fields. Thereafter, we studied the cases of (effective) Quantum General Relativity in \sectionref{sec:case-qgr} and covariant Quantum Yang--Mills theory in \sectionref{sec:case-qym}: Our results are \propsaref{prop:symmetric-ghost-qgr}{prop:symmetric-ghost-qym}, which provide the corresponding symmetric gauge fixing and ghost Lagrange densities, and \thmsaref{thm:symmetric-ghost-homotopy-qgr}{thm:symmetric-ghost-homotopy-qym}, which provide the respective homotopies between the Faddeev--Popov construction, the symmetric setting and the opposed Faddeev--Popov construction. Finally, in \sectionref{sec:total-construction} we consider the coupling of (effective) Quantum General Relativity to Quantum Yang--Mills theory: Our results are \thmref{thm:total_gf_ghost_lagrange_density}, which states that the complete symmetric gauge fixing and ghost Lagrange density can be generated from a \emph{total gauge fixing boson} via the \emph{total super-BRST operator}. In addition, we show in \colref{col:total-setup-double-homotopy} that we obtain a double homotopy if we add the corresponding homotopies of \thmsaref{thm:symmetric-ghost-homotopy-qgr}{thm:symmetric-ghost-homotopy-qym}. We want to use the symmetric ghost Lagrange densities in \cite{Prinz_7} to verify the diffeomorphism-gauge cancellation identities for (effective) Quantum General Relativity coupled to the Standard Model. This would be a major step towards the definition of a consistent renormalization operation for perturbative Quantum General Relativity in the sense of \cite{Prinz_2,Prinz_4} via the methods of \cite{Kreimer_QG1,vSuijlekom_BV,Prinz_3,Prinz_9}.

\section*{Acknowledgments}
\addcontentsline{toc}{section}{Acknowledgments}

The author thanks John Gracey for pointing out the relation to the Curci--Ferrari gauge and Jean Thierry-Mieg for suggesting an interesting follow-up project. This research is supported by the \emph{Kolleg Mathematik Physik Berlin} of the Humboldt University of Berlin and the University of Potsdam via the research group of Sylvie Paycha.

\bibliography{References}{}
\bibliographystyle{babunsrt}

\end{document}